\documentclass[11pt]{article}
\usepackage{fullpage}
\usepackage{authblk}
\usepackage{amsthm,amsmath,amsfonts,amssymb}
\usepackage{xcolor}
\usepackage{bbm}
\usepackage{wrapfig}
\usepackage{enumerate}
\usepackage{hyperref}
\usepackage[capitalize]{cleveref}
\usepackage{tikz}
\usetikzlibrary{decorations.pathmorphing,patterns,decorations.markings,matrix,arrows,shadows}

\newtheorem{theorem}{Theorem}

\newtheorem{lemma}{Lemma}
\newtheorem{corollary}{Corollary}
\newtheorem{claim}{Claim}

\newcommand{\ignore}[1]{}
\newcommand{\boo}[1]{{\em\color{black} #1}}

\newif\ifnotesw\noteswtrue% T to show box & marginal notes; F suppresses.
   {\ifnotesw\marginpar[\hfill\(\top\)]{\(\top\)}\fi}%
   {\ifnotesw\marginpar[\hfill\(\bot\)]{\(\bot\)}\fi}

\newcommand{\mnote}[1]%
    {\ifnotesw\marginpar%
        [{\scriptsize\begin{minipage}[t]{\marginparwidth}
        \raggedleft#1%
                        \end{minipage}}]%
        {\scriptsize\begin{minipage}[t]{\marginparwidth}
        \raggedright#1%
                        \end{minipage}}%
    \fi}
\newcommand{\lip}[2]{\langle #1 , #2 \rangle}

\newcommand{\etal}{{\em et al.}~}

\newcommand{\OO}{\mathcal{O}}

\newcommand{\QQ}{\mathbb{Q}}
\newcommand{\ZZ}{\mathbb{Z}}
\newcommand{\RR}{\mathbb{R}}

\newcommand{\YY}{\mathcal{Y}}

\DeclareMathOperator{\Sp}{Sp}
\DeclareMathOperator{\Tr}{Tr}

\newcommand{\norm}[1]{\lVert #1 \rVert}

\newcommand{\trex}{{T}.{\em rex}~}

\title{Controlling quantum state transfer in rooted products\thanks{Corresponding author: ctamon@clarkson.edu}}
\author[1]{Addison Ballif}
\author[2]{Christino Tamon}
\author[2]{Gabriel Tucker}
\affil[1]{Department of Physics, Brigham Young University -- Idaho}
\affil[2]{Department of Computer Science, Clarkson University}
\date{\today}
\begin{document}
\maketitle

\begin{abstract}
Godsil and McKay (1978) showed that the rooted product is a powerful tool for constructing non-isomorphic cospectral pairs of graphs.
Despite lacking a convenient tensor product structure, we show that the rooted product is useful for 
constructing graphs with good quantum state transfer properties.
In particular, we prove a simple transference principle: if a graph $X$ has quantum state transfer and $Y$ is a {\em controllable} graph,
their rooted product $X^Y$ has quantum state transfer (inherited from $X$).
This complements a folklore property of Cartesian product which preserves perfect state transfer.
However, the rooted product is a significantly sparser graph and, more importantly, can be easily used to construct 
efficient high-fidelity state transfer even if $X$ has no quantum state transfer.
Our proof exploits the fact that a rooted product creates a large number of strongly cospectral pairs of vertices
and that its condition number can be controlled by its pendant subgraph.

\medskip
\par\noindent{\em Keywords}: quantum state transfer, rooted product, controllable graphs.
\par\noindent{\em MSC}: 05C50. 
\end{abstract}

%%%%%%%%%%%%%%%%%%%%%%%%%%%%%%%%%%%%%%%%%%%%%%%%%%%%%%%%%%%%%%%%%%%%%%%%%%%%%%%%%%%%%%%%%%%%%%%%%%%%%
\section{Introduction} 

In a seminal work, Bose \cite{b03} proposed the study of quantum state transfer via continuous-time quantum walk on graphs.
Given a graph $X$ with adjacency matrix $A(X)$ and two chosen vertices $a,b \in V(X)$, the goal of quantum state transfer
is to evolve the unit vector $e_a$ to another unit vector $e_b$ using a quantum walk given by $e^{-itA(X)}$.
We measure the success of this protocol using its transfer fidelity 
\[
	f(t) = |\lip{e_b}{e^{-it A(X)}e_a}|.
\]
If the quantum walk succeeds with unit fidelity, that is, $f(\tau)=1$, for some time $\tau$,
we say {\em perfect} state transfer (PST) occurs between $a$ and $b$ at time $\tau$. 
An important consequence is that we can transmit quantum information (in a lossless manner) 
between the two sites in a quantum network governed by $X$.
Since its inception, quantum state transfer has gained considerable interest 
due to its importance in quantum information (see Kay \cite{k11} and Godsil \cite{g12dm} for relevant surveys).

But, perfect state transfer (PST) is a rare phenomenon on unweighted graphs.
Godsil proved that for any $k$, the number of graphs of maximum degree $k$ that has perfect state transfer is finite
(see \cite{g12b}).
This motivates a relaxation called {\em pretty good} state transfer (PGST) (see \cite{gkss12,vz12})
where we merely demand that fidelity can be made arbitrarily close to $1$, that is, 
for any $\epsilon > 0$, there {\em exists} a time $\tau$ for which $f(\tau) \ge 1-\epsilon$.
A second relaxation is to allow some minimal amount of weights used in our graphs for quantum state transfer. 
This is to minimize the cost of engineering or preparing the different interaction strengths (edge weights)
between sites (vertices) in the associated quantum spin network (or underlying graph); see \cite{k11}.
So, the main problem is to design families of graphs with good fidelity for quantum state transfer
while utilizing few weighted edges.

In this work, we return to a known graph operator called the {\em rooted product} introduced by Godsil and McKay \cite{gm78}.
If $X$ is a base graph and $Y$ is a rooted graph with a distinguished root vertex $r$, their rooted product $X^{Y}$
is obtained by attaching (or 'rooting') a separate copy of $Y$ to each vertex $a$ of $X$ 
(where we identify $a$ with $r$ in $Y$).
In algebraic graph theory, Godsil and McKay employed the rooted product as a powerful tool 
for constructing non-isomorphic families of graphs which are cospectral.

\begin{figure}[h]
\begin{center}
\begin{tikzpicture}[
% T.rex
    main node/.style={circle,draw,font=\bfseries}, main edge/.style={-,>=stealth'},
    scale=0.5,
    stone/.style={},
    black-stone/.style={black!80},
    black-highlight/.style={outer color=black!80, inner color=black!30},
    black-number/.style={white},
    white-stone/.style={white!70!black},
    white-highlight/.style={outer color=white!70!black, inner color=white},
    white-number/.style={black}]
\tikzset{every loop/.style={thick, min distance=17mm, in=45, out=135}}

% to show particle, uncomment the next line
%\gustone[0]{black}{-3}{1.25}

% ellipse
\draw[fill={gray!20}, drop shadow]
    (-10.0,3.5) ellipse (1.25cm and 3.0cm)
    (-6.0,3.5) ellipse (1.25cm and 3.15cm)
    (-2.0,3.5) ellipse (1.25cm and 3.15cm)
    (+2.0,3.5) ellipse (1.25cm and 3.15cm)
    (+6.0,3.5) ellipse (1.25cm and 3.15cm)
    (+10.0,3.5) ellipse (1.25cm and 3.0cm);

\tikzstyle{every node}=[draw, shape=circle, fill={gray!20}, drop shadow];
\path (-10.0,1.5) node [scale=0.8] (aa1) {$a$};
\path (+10.0,1.5) node [scale=0.8] (aa6) {$b$};
\path (-10.0,5.5) node [scale=0.8] (b1) {$y_a$};
\path (-6.0,1.0) node [scale=0.8] (aa2) {};
\path (-2.0,1.0) node [scale=0.8] (aa3) {};
\path (+2.0,1.0) node [scale=0.8] (aa4) {};
\path (+6.0,1.0) node [scale=0.8] (aa5) {};
\path (+10.0,5.5) node [scale=0.8] (b6) {$y_b$};

% base path
\draw[double, double distance=1pt, color=gray]
    (aa1) -- (aa2)
    (aa2) -- (aa3)
    (aa3) -- (aa4)
    (aa4) -- (aa5)
    (aa5) -- (aa6);

\tikzstyle{every node}=[];
\path (-10.0,3.5) node (q1) {$Y$};
\path (-6.0,3.5) node (q1) {$Y$};
\path (-2.0,3.5) node (q1) {$Y$};
\path (+2.0,3.5) node (q1) {$Y$};
\path (+6.0,3.5) node (q1) {$Y$};
\path (+10.0,3.5) node (q1) {$Y$};

\end{tikzpicture}
\caption{The rooted product $X^Y$ of a base graph $X$ with a rooted graph $Y$, where $Y$ is
controllable at its root. As an example, here $X$ is a path connecting $a$ and $b$. 
If there is state transfer between $a$ and $b$ in $X$, 
then there is state transfer between $y_a$ and $y_b$ in $X^Y$, for all $y$ in $Y$.
}
\label{fig:rooted}
\end{center}
\end{figure}

For quantum state transfer, we prove a simple transference principle on rooted products:
if $X$ has state transfer between two of its vertices $a$ and $b$, and $Y$ is a controllable graph,
then $X^Y$ (up to a scaling of $X$) has pretty good state transfer between {\em every} pair 
of vertices $y_a=(y,a)$ and $y_b=(y,b)$; see \cref{fig:rooted}.
Here, $y_a=(y,a)$ denotes the vertex $y$ of $Y$ in the pendant subgraph rooted at vertex $a$ of $X$.
Thus, we may also view the rooted product as a {\em closure} operator for quantum state transfer.
This is similar to a folklore property of the Cartesian product which preserves perfect state transfer 
in its constituent graphs (a famous example being the hypercubes \cite{cddekl05}).
However, the rooted product creates a significantly sparser family of graphs compared to the Cartesian product.

A useful property of rooted product is that it creates a large number of pairs of
strongly cospectral vertices. This facilitates pretty good state transfer between many pairs of vertices.
An earlier construction that creates a large number of strongly cospectral vertices using Cartesian products 
was given by Chan and Sin \cite{cs24} (see also Pal and Bhattacharjya \cite{pb17}, Example 4.1).
As mentioned above, a notable feature of the rooted product is its ability to create {\em sparse} graphs with such property.
For example, consider the comb graph, which is a rooted product of $P_m$ and $P_n$, in contrast to the
denser grid $P_m \Box P_n$.

Using the transference principle, we derive the following corollaries.
We show pretty good state transfer in a $1$-sum of paths with controllable graphs which requires only one weighted edge.
This is similar to a result of Godsil \etal \cite{ggklm20} where a single weighted edge helps create quantum state transfer
in strongly regular graphs.
Also, by taking the rooted product of a graph with universal perfect state transfer 
(which has perfect state transfer between every pair of vertices) with a controllable graph, 
we obtain a family of graphs with multiple pretty good state transfer 
(within each `layer' of the pendant controllable subgraph).
This extends a result in Acuaviva \etal \cite{aceghtwz25} where the controllable graph is a path.

In the second part, which is our main contribution of this work, we address two known
problematic issues with pretty good state transfer: 
{\em non-deterministic} arrival time and the use of {\em transcendental} weights.
Even though pretty good state transfer allows us to obtain fidelity that is arbitrarily close to one,
the required time is nondeterministic as it is purely existential and non-constructive.
So, we focus on {\em high-fidelity} state transfer which requires an explicit asymptotic 
dependence between time and the fidelity gap.
In particular, if the fidelity is at least $1-\epsilon$, we demand that the time is a polynomial function
of $1/\epsilon$ and the size of the graph.

We show that the rooted product $X^Y$ exhibits high-fidelity state transfer between a large number of vertices 
in the subgraphs of $Y$ by attaching sufficiently weak pendant edges.
Here, we apply the \trex method (see Kay and Tamon \cite{kt}) which requires that the two vertices involved 
in state transfer are cospectral and that the condition number of the graph is sufficiently bounded.
In fact, we show that how to use strong cospectrality between the vertices (which is guaranteed by the
rooted product) to obtain bounds on the running time of the \trex scheme.
Moreover, we show that the condition number of the rooted product $X^Y$ can be controlled using 
the condition number of the pendant graph $Y$ (and its vertex deleted subgraph).
Finally, this result does not require scaling the base graph $X$ with a transcendental weight
(thereby, minimizing the use of weighted edges) as required in our results on pretty good state transfer.

Our contribution in this work is to show that, for efficient and constructive quantum state transfer, 
the rooted product is a powerful tool due to the guaranteed {\em strong} cospectrality and 
the {\em controllability} of its {\em condition} number.

%%%%%%%%%%%%%%%%%%%%%%%%%%%%%%%%%%%%%%%%%%%%%%%%%%%%%%%%%%%%%%%%%%%%%%%%%%%%%%%%%%%%%%%%%%%%%%%%%%%%%
\section{Preliminaries} 

We describe terminology and notation used throughout this work.
For a positive integer $n$, let $[n]$ denote the set $\{1,\ldots,n\}$.
We adopt standard notation from graph theory (see Godsil and Royle \cite{gr}).
The spectrum of a graph $X$, denoted $\Sp(X)$, is the set of eigenvalues of its adjacency matrix $A(X)$,
which is also the set of zeros of its characteristic polynomial $\phi(X,t) = \det(tI - A(X))$.

For a graph $X$ and scalar $\alpha \in \RR$, let $X_\alpha$ denote the graph whose adjacency matrix is $\alpha A(X)$.

A rooted graph is a pair $(Y,r)$ where $Y$ is a graph and $r \in V(Y)$ is a distinguished vertex called the root.
Given a graph $X$ with vertex set $[n]$ and a collection of rooted graphs $\YY = \{(Y_k,r_k) : k=1,\ldots,n\}$,
the {\em rooted product} of $X$ with $\YY$ (see Godsil and McKay \cite{gm78}), denoted $X^\YY$, 
is the graph obtained from the disjoint union $\bigcup_{k=1}^{n} Y_k$ by adding the edges $(r_j,r_k)$ 
whenever $(j,k) \in E(X)$.
If $(Y_k,r_k)$ are identical to $(Y,r)$ for all $k$, we write $X^{(Y,r)}$ or simply $X^Y$ if $r$ is clear from context. 
In this case, the adjacency matrix of $X^{(Y,r)}$ is given by
\begin{equation} \label{eqn:rooted-adj}
	A(X^{(Y,r)}) = A(Y) \otimes I_n + e_r e_r^T \otimes A(X)
\end{equation}
where $n$ is the order of $X$.
Based on the decomposition in \cref{eqn:rooted-adj}, we may index each vertex of $X^Y$ as $(y,x)$,
for $y \in Y$ and $x \in X$.

For a graph $X$ of order $n$ and a vector $z$, the {\em walk matrix} $W_X(z)$ is a matrix whose $k$-th column is $A(X)^k z$; 
that is:
\[
	W_X(z) = 
	\begin{pmatrix}
	\vdots 	& \vdots	& \vdots	& \vdots & \vdots		\\
	z 		& A(X)z 	& A(X)^2 z 	& \ldots & A(X)^{n-1}z 	\\
	\vdots 	& \vdots	& \vdots	& \vdots & \vdots		
	\end{pmatrix}.
\]
In what follows, we use $e_a$ to denote the unit vector that is $1$ at $a$ and $0$ elsewhere.
A vertex $a \in V(X)$ is controllable in $X$ if $W_X(e_{a})$ has full rank.
The pair $(X,a)$ is called {\em controllable} if $a$ is controllable in $X$.
Controllable graphs were studied in \cite{crsy11,g12}.

\bigskip
Next, we describe relevant terminology for quantum walks and state transfer.
For a graph $X$, let $A(X) = \sum_{s=1}^{d} \lambda_s E_s$ be the spectral decomposition of $A(X)$.
The continuous-time quantum walk on $X$ is defined by the time-varying unitary matrix
\[
	e^{-itA(X)} = \sum_{s=1}^{d} e^{-it\lambda_s}E_s
\]
as time $t$ ranges in $(-\infty,+\infty)$.

Two vertices $a,b \in V(X)$ are called {\em cospectral} if $e_a^T E_s e_a = e_b^T E_s e_b$ for all $s \in [d]$.
The vertices are {\em strongly cospectral} if for each eigenvalue $\lambda_s$, we have
\begin{equation} \label{eqn:quarrel}
E_s e_a = e^{iq_{a,b}(\lambda_s)} E_s e_b,
\end{equation}
for some phase factor $q_{a,b}(\lambda_s) \in [0,2\pi)$ (the so-called {\em quarrel} of \cite{gl20}). 

We say {\em pretty good state transfer} occurs between $a,b \in V(X)$ if 
for any $\epsilon \ge 0$, there is a time $\tau > 0$ so that 
\begin{equation} \label{eqn:def_pgst}
\lip{e_b}{e^{-i\tau A(X)}e_a} = \rho e^{i\varphi}
\end{equation}
where $\rho \in \RR$ is a real number which satisfies $|\rho| \ge 1-\epsilon$ and $\varphi \in [0,2\pi)$.
The state transfer is {\em perfect} if there is a time $\tau$ so that $|\rho|=1$ holds.
By applying spectral decomposition and strong cospectrality in \cref{eqn:def_pgst}, we get that
\begin{equation} \label{eqn:triple-one}
\rho 
	\ = \ \sum_{s=1}^{d} e^{-i(\lambda_s \tau + \varphi)} \lip{e_b}{E_s e_a} 
    \ = \ \sum_{s=1}^{d} e^{i(q_{a,b}(\lambda_s) - \varphi - \lambda_s \tau)} \lip{e_b}{E_s e_b}.
\end{equation}
Next, applying triangle inequality and resolution of identity to \cref{eqn:triple-one}, we get
\begin{equation}
1-\varepsilon \ \le \ |\rho| \ \le \ \sum_{s=1}^{d} |\lip{e_b}{E_s e_b}| \ = \ \sum_{s=1}^{d} \lip{e_b}{E_s e_b} \ = \ 1.
\end{equation}
Thus, we require the time $\tau$ to satisfy
\begin{equation}
    |\lambda_s\tau - q_{a,b}(\lambda_s) + \varphi| < \epsilon \pmod{2\pi},
\end{equation}
for each eigenvalue $\lambda_s$ of $A(X)$.
Hence, the following theorem due to Kronecker will be useful.

\begin{theorem} \label{thm:kronecker} (see Levitan and Zhikov \cite{lz}) \\
Let $\lambda_1,\ldots,\lambda_d$ and $\theta_1,\ldots,\theta_d$ be arbitrary real numbers.
For an arbitrarily small $\epsilon$, the system of inequalities
\begin{equation}
    |\lambda_s\tau - \theta_s| < \epsilon \pmod{2\pi}, 
    \ \ \
    s=1,\ldots,d
\end{equation}
admits a solution for $\tau$ if and only if 
\begin{equation}
(\forall \ell_1,\ldots,\ell_d \in \ZZ)
    \left[ \ell_1\lambda_1 + \ldots + \ell_d\lambda_d = 0
	\ \implies \
    \ell_1\theta_1 + \ldots + \ell_d\theta_d \equiv 0\pmod{2\pi}\right].
\end{equation}
\end{theorem}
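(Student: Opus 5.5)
We handle the two directions separately; the substantive one is sufficiency. Throughout, "admits a solution for arbitrarily small $\epsilon$" means that for every $\epsilon>0$ some $\tau\in\RR$ satisfies all $d$ inequalities. Write $\lambda=(\lambda_1,\ldots,\lambda_d)$, $\theta=(\theta_1,\ldots,\theta_d)$, and work on the torus $\mathbb{T}^d=(\RR/2\pi\ZZ)^d$.

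\emph{Necessity.} Suppose that for every $\epsilon$ there is $\tau$ with $\lambda_s\tau-\theta_s = 2\pi m_s+\delta_s$, where $m_s\in\ZZ$ and $|\delta_s|<\epsilon$. Fix integers $\ell_1,\ldots,\ell_d$ with $\sum_s \ell_s\lambda_s=0$. Multiplying by $\ell_s$ and summing gives
\[
\sum_s \ell_s\theta_s = -2\pi\sum_s \ell_s m_s - \sum_s\ell_s\delta_s .
\]
So the distance of $\sum_s\ell_s\theta_s$ to $2\pi\ZZ$ is at most $\epsilon\sum_s|\ell_s|$. The $\ell_s$ are fixed, so letting $\epsilon\to 0$ shows $\sum_s\ell_s\theta_s\equiv 0 \pmod{2\pi}$.

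\emph{Sufficiency.} I would use a Fourier-averaging (Bohr mean) argument.

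First, fix $\epsilon>0$ and let $B$ be the box $(-\epsilon/2,\epsilon/2)^d\subset\mathbb{T}^d$. Let $g=\mathbbm{1}_B$ and $f=g*\tilde g$, where $\tilde g(x)=g(-x)$.

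This $f$ has the properties we need:
\begin{itemize}
\item $f$ is continuous and nonnegative.
\item $f$ is supported in $(-\epsilon,\epsilon)^d$.
\item $f(0)>0$.
\item Its Fourier coefficients are $\hat f(\ell)=|\hat g(\ell)|^2\ge 0$ for $\ell\in\ZZ^d$.
\end{itemize}
Since $\hat g\in\ell^2(\ZZ^d)$ by Parseval, the coefficients $\hat f(\ell)$ are absolutely summable. Hence $f(x)=\sum_\ell \hat f(\ell)e^{i\ell\cdot x}$ converges uniformly.

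Next, consider the function $F(\tau)=f(\lambda\tau-\theta)$ and its time average
\[
M=\lim_{T\to\infty}\frac{1}{2T}\int_{-T}^{T}F(\tau)\,d\tau .
\]
Uniform convergence lets us average term by term. Each term contributes
\[
\hat f(\ell)e^{-i\ell\cdot\theta}\cdot \lim_{T\to\infty}\frac{1}{2T}\int_{-T}^{T} e^{i(\ell\cdot\lambda)\tau}\,d\tau .
\]
The limit is $1$ if $\ell\cdot\lambda=0$ and $0$ otherwise. Therefore
\[
M=\sum_{\ell:\ \ell\cdot\lambda=0}\hat f(\ell)\,e^{-i\ell\cdot\theta}.
\]
By hypothesis, every $\ell$ with $\ell\cdot\lambda=0$ satisfies $\ell\cdot\theta\equiv 0\pmod{2\pi}$, so each phase equals $1$. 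All coefficients are nonnegative, so $M\ge\hat f(0)=\epsilon^{2d}/(2\pi)^d>0$.

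Finally, a positive mean forces $F(\tau)>0$ for some $\tau$. Then $\lambda\tau-\theta$ lies in the support of $f$, which is exactly the statement that $|\lambda_s\tau-\theta_s|<\epsilon \pmod{2\pi}$ for all $s$.

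The main obstacle is justifying the interchange of the limit and the infinite Fourier sum. This is why I choose $f$ as an autocorrelation $g*\tilde g$: it gives nonnegative and absolutely summable Fourier coefficients at the same time, so the Weierstrass M-test handles the interchange.

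An alternative route uses Pontryagin duality. The closure $H$ of the line $\{\lambda\tau\}$ in $\mathbb{T}^d$ is a closed subgroup, and its annihilator in $\ZZ^d$ is $\{\ell:\ell\cdot\lambda=0\}$. Because $H$ equals the annihilator of its own annihilator, the hypothesis says exactly that $\theta\in H$, which is the claim. I would mention this only as a remark and keep the elementary averaging proof as the main argument.
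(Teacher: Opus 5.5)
Your proof is correct. There is no argument in the paper to compare it with step by step: the paper does not prove this statement. It quotes it from Levitan and Zhikov and uses it only as a black box, through \cref{thm:pgst}, in the proof of \cref{thm:closure}. Your proposal therefore supplies a self-contained argument where the paper has none.

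Your necessity direction is fine. For sufficiency, the choice $f=g*\tilde g$ is the right device, for two reasons. First, $\hat f(\ell)\propto|\hat g(\ell)|^2\ge 0$. Once the hypothesis forces $e^{-i\ell\cdot\theta}=1$ on the resonant set $\{\ell:\ell\cdot\lambda=0\}$, every surviving term of the Bohr mean is a nonnegative real number. Second, Parseval gives $\sum_\ell\hat f(\ell)<\infty$. Together with $\bigl|\frac{1}{2T}\int_{-T}^{T}e^{ic\tau}\,d\tau\bigr|\le 1$, this justifies passing $T\to\infty$ termwise.

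What this route buys you is that you never have to reduce to a rationally independent subfamily of the $\lambda_s$. The hypothesis is used exactly on the integer relations that survive averaging.

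Your duality remark is the most conceptual version. Let $H$ be the closure of the line $\{\lambda\tau\}$; then $H^{\perp}=\{\ell:\ell\cdot\lambda=0\}$, and the hypothesis says precisely $\theta\in H^{\perp\perp}=H$. It is shorter but rests on Pontryagin duality, whereas the averaging proof is elementary.

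Only cosmetic fixes are needed:
\begin{itemize}
\item Assume $\epsilon<\pi$ so that $B$ and $B-B$ do not wrap around $\mathbb{T}^d$.
\item Say that $f$ is positive exactly on the open box $(-\epsilon,\epsilon)^d$; its closed support is the closed box. Positivity on the open box is what your final step actually uses.
\item The constants $\hat f(\ell)=|\hat g(\ell)|^2$ and $\hat f(0)=\epsilon^{2d}/(2\pi)^d$ disagree by a factor $(2\pi)^d$ under any single normalization. This is harmless, since only $\hat f(0)>0$ is needed.
\end{itemize}
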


We state a characterization theorem for pretty good state transfer.

\begin{theorem} \label{thm:pgst} (see Theorem 5.2 in Acuaviva \etal \cite{aceghtwz25}) \\
Let $X$ be a graph with eigenvalues $\lambda_1,\ldots,\lambda_d$ in the eigenvalue support of $a \in V(X)$.
Then, $X$ has pretty good state transfer from $a$ to $b$ if and only if the following hold:
\begin{enumerate}[(i)]
\item The vertices $a$ and $b$ are strongly cospectral with quarrels $q_{a,b}(\lambda_j)$, $j=1,\ldots,d$.
\item There exists $\delta \in \RR$ so that for all $\ell_1,\ldots,\ell_d \in \ZZ$:
	\begin{equation} \label{eqn:pgst-imply}
	\sum_{s=1}^{d} \ell_s\lambda_s = 0
	\ \implies \
	\sum_{s=1}^{d} \ell_s(q_{a,b}(\lambda_s) + \delta) \equiv 0\pmod{2\pi}.
	\end{equation}
\end{enumerate}
\end{theorem}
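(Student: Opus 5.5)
The characterization in \cref{thm:pgst} reduces to \cref{thm:kronecker} once strong cospectrality is shown to be necessary. So I would prove the two directions separately, with the eigenvalue support of $a$ as the index set throughout. The fixed reference phase $\varphi$ in \cref{eqn:def_pgst} will play the role of $-\delta$ up to sign conventions.

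\emph{Necessity.} Suppose pretty good state transfer holds, and take times $\tau_k$ with $|\lip{e_b}{e^{-i\tau_k A(X)}e_a}|\to 1$.
\begin{enumerate}[(1)]
\item \emph{Strong cospectrality.} Write $U_k=e^{-i\tau_k A(X)}$. Since $U_k e_a$ is a unit vector whose $b$-entry has modulus tending to $1$, we get $\norm{U_k e_a - \gamma_k e_b}\to 0$ for some unimodular scalars $\gamma_k$. Pass to a subsequence along which $\gamma_k\to\gamma$ and each phase $e^{-i\tau_k\lambda_s}\to\omega_s$, using compactness of the torus. Applying $E_s$ gives $\omega_s E_s e_a=\gamma E_s e_b$ for every $s$.
\item Since $|\omega_s|=|\gamma|=1$, this yields strong cospectrality \cref{eqn:quarrel} with $e^{iq_{a,b}(\lambda_s)}=\gamma\,\overline{\omega_s}$. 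It also shows that the eigenvalue supports of $a$ and $b$ coincide.
\item \emph{Condition (ii).} Set $\delta=-\arg\gamma$. Then $e^{-i\tau_k\lambda_s}\to e^{i(q_{a,b}(\lambda_s)+\delta)}$ for every $s$ in the support. The ``only if'' direction of \cref{thm:kronecker}, applied with $\theta_s=-(q_{a,b}(\lambda_s)+\delta)$, gives \cref{eqn:pgst-imply}. One can also see this directly: for an integer relation $\sum\ell_s\lambda_s=0$, the product $\prod_s e^{-i\tau_k\ell_s\lambda_s}=1$ for every $k$, so the limit $e^{i\sum\ell_s(q_{a,b}(\lambda_s)+\delta)}$ also equals $1$.
\end{enumerate}

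\emph{Sufficiency.} Assume (i) and (ii).
\begin{enumerate}[(1)]
\item Apply \cref{thm:kronecker} with $\theta_s=-(q_{a,b}(\lambda_s)+\delta)$. For any $\epsilon$ it gives $\tau$ with $\lambda_s\tau\approx -(q_{a,b}(\lambda_s)+\delta)\pmod{2\pi}$ to within $\epsilon$ for all $s$ in the support.
\item By \cref{eqn:triple-one} and strong cospectrality,
\[
\lip{e_b}{e^{-i\tau A(X)}e_a}=\sum_s e^{-i\tau\lambda_s}e^{iq_{a,b}(\lambda_s)}\lip{e_b}{E_s e_b}.
\]
\item Each exponential lies within $\epsilon$ of $e^{-i\delta}$. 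Only eigenvalues in the support contribute, and these are also the support of $b$ by strong cospectrality. The weights $\lip{e_b}{E_s e_b}\ge 0$ sum to $1$.
\item Hence the modulus is at least $1-\epsilon$, which is pretty good state transfer.
\end{enumerate}

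\emph{Main obstacle.} The delicate step is the compactness argument in necessity. We must extract a single limiting phase $\gamma$ and limiting $\omega_s$ consistently. We must also make sure that both the quarrel and the Kronecker condition are taken over exactly the eigenvalue support of $a$. The restriction to the support is needed because eigenvalues outside it impose no constraint and would falsify (ii) if they were included.
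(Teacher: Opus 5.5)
Your proposal is correct in substance and follows the same route as the paper. The paper does not prove \cref{thm:pgst} itself; it cites Acuaviva et al.\ and, just before the statement, sketches your sufficiency argument. Under strong cospectrality the amplitude (up to the phase $e^{i\varphi}$) is a convex combination, with weights $\lip{e_b}{E_s e_b}$, of unimodular numbers. One therefore needs $|\lambda_s\tau - q_{a,b}(\lambda_s)+\varphi|<\epsilon \pmod{2\pi}$ for all $s$, which \cref{thm:kronecker} supplies. Your compactness argument for necessity is sound: you extract limits $\gamma$ and $\omega_s$, deduce strong cospectrality, and obtain (ii) from $\prod_s\omega_s^{\ell_s}=1$. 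It supplies what the paper's sketch takes for granted.

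You do have a consistent sign slip in $\theta_s$ that should be fixed. From $e^{iq_{a,b}(\lambda_s)}=\gamma\overline{\omega_s}$ and $\gamma=e^{-i\delta}$ you get $\omega_s=\gamma e^{-iq_{a,b}(\lambda_s)}=e^{-i(q_{a,b}(\lambda_s)+\delta)}$, not $e^{i(q_{a,b}(\lambda_s)+\delta)}$. In the necessity direction this is harmless, because the congruence in (ii) is invariant under negation.

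In the sufficiency direction it matters. With your choice $\theta_s=-(q_{a,b}(\lambda_s)+\delta)$, the summand $e^{-i\tau\lambda_s}e^{iq_{a,b}(\lambda_s)}$ is close to $e^{i(2q_{a,b}(\lambda_s)+\delta)}$, not to $e^{-i\delta}$, so your step (3) does not follow as written. The modulus bound still survives for a real symmetric $A(X)$, but only because all quarrels are then $0$ or $\pi$, a fact you never invoke. It fails for general phases $q_{a,b}(\lambda_s)\in[0,2\pi)$.

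The repair is to take $\theta_s=q_{a,b}(\lambda_s)+\delta$. Kronecker's hypothesis is then literally condition (ii). Every summand then lies within $\epsilon$ of $e^{-i\delta}$, as you claim, and this is exactly the paper's alignment condition with $\varphi=-\delta$.
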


Recall that the parameter $\delta$ is included in the second condition of \cref{thm:pgst}
to account for the phase factor in the fidelity $\lip{e_b}{e^{-itA(X)}e_a}$.

%%%%%%%%%%%%%%%%%%%%%%%%%%%%%%%%%%%%%%%%%%%%%%%%%%%%%%%%%%%%%%%%%%%%%%%%%%%%%%%%%%%%%%%%%%%%%%%%%%%%%
\section{Transference}

In this section, we prove a transference principle for quantum state transfer in rooted products.

\begin{theorem} \label{thm:closure}
Let $X$ be a graph with pretty good state transfer between $a,b \in V(X)$ and 
let $Y$ be a graph controllable at vertex $r$.
Then, there are infinitely many real numbers $\alpha$ so that
the rooted product $X_\alpha^{(Y,r)}$ has pretty good state transfer between $(y,a)$ and $(y,b)$, 
for all $y \in V(Y)$.
\end{theorem}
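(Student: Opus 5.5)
We work with the spectral decomposition of $A(X_\alpha^{(Y,r)}) = A(Y)\otimes I_n + \alpha\, e_re_r^T\otimes A(X)$ obtained by diagonalizing the base graph. Write $A(X)=\sum_{s}\mu_s F_s$. Since $e_re_r^T\otimes A(X)$ and $A(Y)\otimes I_n$ both commute with $I_m\otimes F_s$, the matrix splits as
\[
A(X_\alpha^{(Y,r)}) = \sum_s \bigl(A(Y)+\alpha\mu_s e_re_r^T\bigr)\otimes F_s .
\]
For each $s$ set $M_s = A(Y)+\alpha\mu_s e_re_r^T$ with spectral decomposition $M_s=\sum_{\theta}\theta\, G_{s,\theta}$. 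Then the eigenprojections of the rooted product are the sums of $G_{s,\theta}\otimes F_s$ over pairs $(s,\theta)$ sharing a common eigenvalue $\theta$. Consequently
\[
\lip{e_{(y,b)}}{e^{-itA}e_{(y,a)}} = \sum_s \lip{e_y}{e^{-itM_s}e_y}\,\lip{e_b}{F_se_a}.
\]

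By \cref{thm:pgst}, $a$ and $b$ are strongly cospectral in $X$, so $F_se_a=e^{iq_s}F_se_b$ for every $\mu_s$ in the support of $a$. If we ignore coincidences between eigenvalues of different blocks, each vector $(G_{s,\theta}e_y)\otimes(F_se_a)$ equals $e^{iq_s}(G_{s,\theta}e_y)\otimes(F_se_b)$. Hence $(y,a)$ and $(y,b)$ are strongly cospectral, with quarrel $q_s$ attached to every eigenvalue $\theta$ of $M_s$.

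The PGST condition then becomes the following. We need a $\tau$ such that $\theta\tau \approx q_s+\delta \pmod{2\pi}$ simultaneously for all $s$ in the support of $a$ and all $\theta$ in the eigenvalue support of $y$ in $M_s$. Since $\sum_\theta\lip{e_y}{G_{s,\theta}e_y}=1$, this forces each $\lip{e_y}{e^{-i\tau M_s}e_y}\approx e^{-i(q_s+\delta)}$. Then the sum above is approximately $e^{-i\delta}\sum_s\lip{e_b}{F_se_b}=e^{-i\delta}$, giving fidelity near $1$.

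To verify this with \cref{thm:kronecker}, I would choose $\alpha$ so that the eigenvalues are as independent over $\QQ$ as possible. Controllability of $(Y,r)$ is used here. It implies that $\phi(Y,t)$ and $\phi(Y\setminus r,t)$ are coprime and that all eigenvalues of $Y$ are simple with $e_r$ having nonzero projection on each eigenspace. The characteristic polynomial of $M_s$ is $\phi(Y,t)-\alpha\mu_s\phi(Y\setminus r,t)$.

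Take $\alpha$ transcendental over the field $K$ generated by the eigenvalues of $X$ and $Y$; there are infinitely many such $\alpha$. Then each $\phi(Y,t)-\alpha\mu_s\phi(Y\setminus r,t)$, viewed over $K(\alpha)$, is irreducible in $t$, being linear in $\alpha$ with coprime coefficients. Its roots are distinct from those of the other blocks, because $\mu_s\ne\mu_{s'}$. The key claim, which I expect to be the main obstacle, is that the full collection of roots $\{\theta_{s,j}\}$ satisfies essentially no nontrivial integer relation $\sum\ell_{s,j}\theta_{s,j}=0$, beyond relations that force the corresponding $\sum \ell_{s,j}(q_s+\delta)\equiv0$.

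I would approach this through traces and specialization. A relation holding for transcendental $\alpha$ persists as a polynomial identity in $\alpha$, so we may expand the roots as $\alpha\to\infty$. For large $\alpha$, $M_s$ has one root $\approx\alpha\mu_s$ and the rest close to the roots of $\phi(Y\setminus r,t)$, with corrections $O(1/\alpha)$ distinct per block. Galois conjugation over $K(\alpha)$ permutes the roots within each block transitively, so an integer relation must have coefficients constant on each block. That reduces it to $\sum_s \ell_s\Tr(M_s)=\sum_s\ell_s(\Tr A(Y)+\alpha\mu_s)=0$. Transcendence of $\alpha$ then forces $\sum_s\ell_s\mu_s=0$ and $\sum_s\ell_s\cdot m=0$, where $m=|V(Y)|$, so $\sum\ell_s=0$.

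Once reduced to this, condition (ii) of \cref{thm:pgst} for $X$ gives $\sum_s\ell_s(q_s+\delta)\equiv0$, which is exactly what \cref{thm:kronecker} requires. If the Galois transitivity argument breaks because $K$ already contains some roots, I would restrict instead to $\alpha$ avoiding the countably many algebraic exceptions, still leaving infinitely many choices. The same argument also covers the remaining step, namely that no eigenvalue coincidences between blocks occur, which validates the strong cospectrality computed earlier.
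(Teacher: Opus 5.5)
Your setup is the paper's. You have the block decomposition $A=\sum_s M_s\otimes F_s$, with the quarrel $q_s$ inherited by every eigenvalue of $M_s$ (\cref{lemma:chiral}). You also have irreducibility of $\phi(Y,t)-\alpha\mu_s\phi(Y\setminus r,t)$ over $K(\alpha)$ from transcendence plus controllability (\cref{lemma:irreducible}). The gap is the step you flagged as the main obstacle, and the claim there is false. Galois transitivity on each block does not force an integer relation $\sum_{s,j}\ell_{s,j}\theta_{s,j}=0$ to have block-constant coefficients. It only makes the space of relations invariant under the induced permutation action. For a counterexample, take $X=P_2$ and $Y=P_2$ rooted at an end, so that $X_\alpha^Y$ is $P_4$ with middle weight $\alpha$. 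The blocks $t^2-\alpha t-1$ and $t^2+\alpha t-1$ are irreducible over $\QQ(\alpha)$. Yet the relation $\frac{\alpha+\sqrt{\alpha^2+4}}{2}+\frac{-\alpha-\sqrt{\alpha^2+4}}{2}=0$ has coefficients $(1,0)$ on one block and $(0,1)$ on the other. More generally, whenever $Y$ is bipartite and $\pm\mu_s\in\Sp(X)$, the roots of the $-\mu_s$ block are the negatives of those of the $\mu_s$ block. Your $\alpha\to\infty$ expansion cannot exclude such relations, because they hold identically as algebraic functions of $\alpha$.

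The missing idea, which is exactly what the paper uses, is that block-constancy is unnecessary. Every eigenvalue of block $s$ carries the same quarrel $q_s$, so the target quantity $\sum_{s,j}\ell_{s,j}(q_s+\delta)$ depends only on the block sums $L_s=\sum_j\ell_{s,j}$. Apply $\Tr_{M/K(\alpha)}$ to the relation, where $M$ is the splitting field; this is just averaging over the Galois group. Each block is a single orbit with root sum $\Tr M_s=\alpha\mu_s$, so each $\theta_{s,j}$ has trace $\frac{[M:K(\alpha)]}{m}\alpha\mu_s$. The relation therefore collapses to $\sum_s L_s\mu_s=0$. Condition (ii) of \cref{thm:pgst} for $X$ then gives $\sum_s L_s(q_s+\delta)\equiv 0\pmod{2\pi}$, which is precisely what the rooted product needs.

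Two smaller points. First, $\Tr A(Y)=0$ because $Y$ is loopless, so $\Tr M_s$ has no constant term and your deduction $\sum_s\ell_s=0$ does not follow. It is not needed anyway, since $\delta$ absorbs the block sums. Second, if $0\in\Sp(X)$, the corresponding block is $M_s=A(Y)$, which does not involve $\alpha$, and $\phi(Y,t)$ splits over $K$. So irreducibility fails there, and the paper's \cref{lemma:irreducible} has the same oversight. The trace argument still works: that block contributes an element of $K$, which transcendence of $\alpha$ separates from $\frac{\alpha}{m}\sum_s L_s\mu_s$. In the relation for $X$ it enters only as $L_s\cdot 0$.
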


\par\noindent{\em Remark}. It is convenient here to think of $\alpha$ as transcendental 
although this assumption can be removed in most subsequent corollaries.

\medskip

\begin{proof}
We will present the proof of \cref{thm:closure} in three parts. 
First, we locate the eigenvalues of the rooted product using a determinant identity from \cite{gm78}.
Second, we show a tensor factorization of the eigenprojectors of $X_\alpha^Y$,
Finally, we apply \cref{thm:kronecker} coupled with the trace method from \cite{kly17}.
Our proof generalizes a result in \cite{aceghtwz25} (see Section 5(b)).

\subsection{Eigenvalues}

Let $F_0$ be the smallest field extension of $\QQ$ which contains all eigenvalues of $X$ and $Y$.
Note that all eigenvalues of $X_\alpha$ lie in $F=F_0(\alpha)$.

For a real scalar $\lambda$, let $\phi_\lambda(t)$ be the characteristic polynomial of the graph 
obtained from $Y$ by adding a self-loop of weight $\lambda$ to its root vertex $r$. 
Note that
\begin{equation} \label{eqn:char-poly}
	\phi_\lambda(t) = \phi(Y,t) - \lambda\phi(Y\setminus r,t).
\end{equation}
The following lemma shows that the characteristic polynomial of the rooted product is a product
of the characteristic polynomials of the pendant subgraph $Y$ after a rank-one perturbation induced
by the individual eigenvalues of $X$.

\begin{lemma} \label{lemma:gm-decomposition}
The characteristic polynomial of the rooted product $X_\alpha^Y$ is given by
\[
	\phi(X_\alpha^Y,t) = \prod_{\lambda \in \Sp(X_\alpha)} \phi_\lambda(t)
\]

\begin{proof}
We apply the following determinantal identity (due to Godsil and McKay \cite{gm78}):
\begin{equation} \label{eqn:godsil-mckay}
	\phi(X_\alpha^Y,t) 
	= \det(\phi(Y,t)I - \phi(Y\setminus r,t)A(X_\alpha)).
\end{equation}
After unrolling the determinant and applying \cref{eqn:char-poly}, the claim follows.
\end{proof}
\end{lemma}

Let us denote the zeros of $\phi_\lambda(t)$ as $\theta_j(\lambda)$ as $j=1,\ldots,d$.

\begin{lemma} \label{lemma:irreducible}
For each $\lambda \in \Sp(X_\alpha)$, $\phi_\lambda(t)$ is irreducible over $F=F_0(\alpha)$.

\begin{proof}
Suppose $\phi_\lambda(t)$ is reducible over $F[t]$; so, $\phi_\lambda(t) = p(t)q(t)$ for some $p(t),q(t) \in F[t]$.
Each eigenvalue $\lambda$ of $X_\alpha$ is of the form $\lambda = \alpha\theta$ where $\theta \in \Sp(X)$. 
Note $\theta \in F_0$.
Thus,
\[
	\phi_\lambda(t) = \phi(Y,t) - \alpha\theta \phi(Y\setminus r,t),
\]
which shows $\phi_\lambda(t)$ is linear in $\alpha$. 
Hence, one of the factors of $\phi_\lambda(t)$, say $p(t)$, must be an element of $F_0[t]$
(here, we used the fact that $\alpha$ is \boo{transcendental}).
It follows that $p(t)$ divides both $\phi(Y,t)$ and $\phi(Y\setminus r,t)$.
But, $\phi(Y,t)$ and $\phi(Y\setminus r,t)$ are coprime over $F_0[t]$ 
since $(Y,r)$ is \boo{controllable} (see Lemma 7.2 in Godsil \cite{g12}).
This contradiction proves the claim.
\end{proof}
\end{lemma}

\subsection{Eigenprojectors}

The following lemma is a generalization of a similar observation in \cite{aceghtwz25} which was
specific for weighted paths (or Jacobi matrices). It shows that the eigenprojectors of the rooted
products admits a convenient tensor products decomposition (not unlike the eigenprojectors of the
Cartesian product). Moreover, it also shows that (strong) cospectrality is an inherited property
in a rooted graph product.

\begin{lemma} \label{lemma:chiral} 
For any graph $X$ and any rooted graph $(Y,r)$, consider the rooted product $X^{(Y,r)}$.
For each eigenvalue $\lambda$ of $X$, let $\theta_j(\lambda)$, $j=1,\ldots,d$, 
be the eigenvalues of $J_\lambda = A(Y) + \lambda e_r e_r^T$.
Then, $\theta_j(\lambda)$ is an eigenvalue of $X^{(Y,r)}$ with eigenprojector
\begin{equation} \label{eqn:tensor-decomp}
	E_{\theta_j(\lambda)}(X^{(Y,r)}) = E_{\theta_j(\lambda)}(J_\lambda) \otimes E_\lambda(X).
\end{equation}
Moreover, for $a,b \in V(X)$ and $y \in V(Y)$, the following ({\em inherited quarrel}) property holds:
\begin{equation} \label{eqn:shared-quarrel}
	q_{(y,a),(y,b)}(\theta_j(\lambda)) = q_{a,b}(\lambda),
	\ \ \
	j=1,\ldots,d.
\end{equation}

\begin{proof}
Note that
\[
	A(X^{(Y,r)}) = A(Y) \otimes I + e_r e_r^T \otimes A(X).
\]
Suppose $x$ is a $\lambda$-eigenvector of $A(X)$ and
$y$ is a $\theta_j(\lambda)$-eigenvector of $A(Y) + \lambda e_r e_r^T$. Then,
\begin{align*}
A(X^{(Y,r)}) y \otimes x 
	&= A(Y)y \otimes x + e_r e_r^T y \otimes A(X)x \\
	&= A(Y)y \otimes x + \lambda e_r e_r^T y \otimes x \\
	&= (A(Y) + \lambda e_r e_r^T) y \otimes x \\
	&= \theta_j(\lambda) y \otimes x.
\end{align*}
This proves \cref{eqn:tensor-decomp}. Finally, \cref{eqn:shared-quarrel} follows from this.
\end{proof}
\end{lemma}

Note that \cref{lemma:chiral} holds independently of the choice of $\alpha$ (transcendental or otherwise).

\subsection{Kronecker approximation}

We now show that the second condition in \cref{thm:pgst} holds.
For each eigenvalue $\lambda$ of $X_\alpha$, suppose there are integers 
$\ell_j(\lambda) \in \ZZ$, $j=1,\ldots,d$, so that
\begin{equation} \label{eqn:kronecker-premise}
	\sum_{j \in [d]} \sum_{\lambda \in \Sp(X_\alpha)} \ell_j(\lambda)\theta_j(\lambda) = 0.
\end{equation}
We must show that 
\begin{equation} \label{eqn:kronecker-goal}
	\sum_{j \in [d]} \sum_{\lambda \in \Sp(X_\alpha)} \ell_j(\lambda) q_j(\lambda) \equiv 0\pmod{2\pi}
\end{equation}
where we had used $q_j(\lambda)$ to denote the quarrel
\begin{equation}
q_j(\lambda) := q_{(y,a),(y,b)}(\theta_j(\lambda)).
\end{equation}
Here, we fix and assume that $a$ and $b$ are with pretty good state transfer in $X$.

Let $F_\lambda$ be the splitting field of $\phi_\lambda(t)$ over $F$.
Let $M$ be the smallest field extension of $F$ which contains all $F_\lambda$, for distinct $\lambda \in \Sp(X_\alpha)$.
Thus, $M$ is the splitting field of $\phi(X_\alpha^Y,t)$ (see \cref{lemma:gm-decomposition}).
We now employ the trace method (due to Kempton, Lippner, and Yau \cite{kly17}).

Applying the field trace (see Roman \cite{r95}, Section 7.1) to both sides of \cref{eqn:kronecker-premise}, we obtain
\begin{equation} \label{eqn:premise}
	\sum_{j \in [r]} \sum_{\lambda \in \Sp(X_\alpha)} \ell_j(\lambda) \Tr_{M/F}(\theta_j(\lambda)) = 0.
\end{equation}
Note that
\[
	\Tr_{M/F}(\theta_j(\lambda))
	= [M:F_\lambda] \Tr_{F_\lambda/F}(\theta_j(\lambda))
\]
and
\[
	\Tr_{F_\lambda/F}(\theta_j(\lambda))
	= \frac{[F_\lambda:F]}{\deg(\theta_j(\lambda))}[t^{m-1}]m_{\theta_j(\lambda)}(t)
	= \frac{[F_\lambda:F]}{|V(Y)|} \lambda
\]
where $\deg(\omega)$ denotes the degree of the minimal polynomial $m_\omega(t)$ of the field element $\omega$.
We have used the fact that the minimal polynomial of $\theta_j(\lambda)$ is equal to $\phi_\lambda(t)$
as the latter is irreducible over $F$ (see \cref{lemma:irreducible}).
Therefore,
\[
	\Tr_{M/F}(\theta_j(\lambda))
	= \frac{[M:F]}{|V(Y)|} \lambda
\]
by the tower law $[M:F] = [M:F_\lambda][F_\lambda:F]$.
Returning to \cref{eqn:premise}, we obtain
\begin{equation} \label{eqn:pgst-X}
	\sum_{\lambda \in \Sp(X_\alpha)} L_\lambda \lambda = 0,
	\hspace{.5in} \mbox{ where $L_\lambda = \sum_{j \in [d]} \ell_j(\lambda)$.}
\end{equation}
Now, we use the assumption that $X_\alpha$ has pretty good state transfer between $a,b \in V(X)$.
Since \cref{eqn:pgst-X} holds, we have
\[
	\sum_{\lambda \in \Sp(X_\alpha)} L_\lambda (q_{a,b}(\lambda) + \varphi) \equiv 0\pmod{2\pi}
\]
for some $\varphi \in [0, 2\pi)$.
By \cref{eqn:shared-quarrel}, $q_j(\lambda) = q_{a,b}(\lambda)$ for all $j \in [d]$. 
Thus, we have
\[
	\sum_{\lambda \in \Sp(X_\alpha)} \sum_{j \in [d]} \ell_j(\lambda) (q_{a,b}(\lambda) + \varphi) 
	= \sum_{\lambda \in \Sp(X_\alpha)} \sum_{j \in [d]} \ell_j(\lambda) (q_j(\lambda) + \varphi)
	\equiv 0\pmod{2\pi}
\]
which proves \cref{eqn:kronecker-goal}.
This shows that the second condition of \cref{thm:pgst} is satisfied.

Recall we had shown that the first condition of \cref{thm:pgst} is satisfied in \cref{lemma:chiral};
that is, $(y,a)$ and $(y,b)$ in $X_\alpha^Y$ are strongly cospectral with a shared quarrel.
Thus, by \cref{thm:pgst}, $X_\alpha^Y$ admits pretty good state transfer between the vertices
$(y,a)$ and $(y,b)$ for all $y \in V(Y)$.

This completes the proof (of \cref{thm:closure}).
\end{proof}

%%%%%%%%%%%%%%%%%%%%%%%%%%%%%%%%%%%%%%%%%%%%%%%%%%%%%%%%%%%%%%%%%%%%%%%%%%%%%%%%%%%%%%%%%%%%%%%%%%%%%
\section{Power of edge weights} \label{sec:power}

\begin{lemma} \label{lem:p2_controllable}
For any controllable graph $(H,r)$, the rooted product $(\alpha P_2)^{(H,r)}$ 
has pretty good state transfer between corresponding vertices of $H$, for infinitely many $\alpha \in \RR$.

\begin{proof}
Take $P_2$ as the base graph and apply \cref{thm:closure}.
\end{proof}
\end{lemma}

\begin{figure}[h]
\begin{center}
\begin{tikzpicture}[
% T.rex
    main node/.style={circle,draw,font=\bfseries}, main edge/.style={-,>=stealth'},
    scale=0.5,
    stone/.style={},
    black-stone/.style={black!80},
    black-highlight/.style={outer color=black!80, inner color=black!30},
    black-number/.style={white},
    white-stone/.style={white!70!black},
    white-highlight/.style={outer color=white!70!black, inner color=white},
    white-number/.style={black}]
\tikzset{every loop/.style={thick, min distance=17mm, in=45, out=135}}

% to show particle, uncomment the next line
%\gustone[0]{black}{-3}{1.25}

% ellipse
\draw[fill={gray!20}, drop shadow]
    (-5.0,0.0) ellipse (4.25cm and 1.25cm)
    (+5.0,0.0) ellipse (4.25cm and 1.25cm);

\tikzstyle{every node}=[draw, thick, shape=circle, fill={gray!20}];
\path (-1.5,0.0) node [scale=0.8] (a1) {$a$};
\path (+1.5,0.0) node [scale=0.8] (b1) {$b$};

\tikzstyle{every node}=[];
\path (-5.0,0.0) node (qleft) {\mbox{\LARGE $H$}};
\path (+5.0,0.0) node (qright) {\mbox{\LARGE $H$}};

% middle/bridge edge
\draw[very thick, color=gray]
    (a1) -- (b1);

\tikzstyle{every node}=[];
\node at (0.0,0.5) {$\alpha$};

\end{tikzpicture}
\caption{The rooted product of $P_2$ with a controllable graph $H$:
pretty good state transfer occurs symmetrically across the weighted edge $(a,b)$
for infinitely many $\alpha$ (even as $\alpha \rightarrow 0$).
This complements a result of Kempton-Lippner-Yau \cite{kly17} where 
$\alpha=1$ and $H$ is a path with a weighted loop on a pendant vertex.
}
\label{fig:p2_rooted}
\end{center}
\end{figure}

A special case of \cref{lem:p2_controllable} is the path $P_{2n}$ where the middle edge is weighted (see \cref{fig:p2_rooted}). 
This complements the result of Kempton \etal \cite{kly17} where, instead of weighting the middle edge, 
they added weighted loops to the two pendant vertices. So, we traded two weighted loops for a single weighted edge.
It is curious that \cref{lem:p2_controllable} holds even for an infinite sequence of $(\alpha_m)_{m=1}^{\infty}$ 
which tends to zero as $m \rightarrow \infty$ (in the limit, the two controllable graphs 
are disconnected from each other).

Let $X_1$ and $X_2$ be two graphs. The $1$-sum of $X_1$ and $X_2$ is a graph $X$
where $V(X) = V(X_1) \cup V(X_2)$ with $|V(X_1) \cap V(X_2)| = 1$ and $E(X) = E(X_1) \cup E(X_2)$.
If $r \in V(X_1) \cap V(X_2)$, then we say $X$ is a $1$-sum of $X_1$ and $X_2$ joined at vertex $r$.

\begin{figure}[h]
\begin{center}
\begin{tikzpicture}[
    main node/.style={circle,draw,font=\bfseries}, main edge/.style={-,>=stealth'},
    scale=0.5,
    stone/.style={},
    black-stone/.style={black!80},
    black-highlight/.style={outer color=black!80, inner color=black!30},
    black-number/.style={white},
    white-stone/.style={white!70!black},
    white-highlight/.style={outer color=white!70!black, inner color=white},
    white-number/.style={black}]
\tikzset{every loop/.style={thick, min distance=17mm, in=45, out=135}}

% to show particle, uncomment the next line
%\gustone[0]{black}{-3}{1.25}

% ellipse
\draw[fill={gray!20}, drop shadow]
    (+10.0,0.0) ellipse (3.1cm and 1.75cm);

\tikzstyle{every node}=[draw, thick, shape=circle, fill={gray!20}];
\path (+0.5,0.0) node [scale=0.7] (b1) {$1$};
\path (+7.5,0.0) node [scale=0.7] (bb1) {$m$};

\tikzstyle{every node}=[];
\path (+4.0,1.0) node (qright) {\mbox{$P_m$}};
\path (+10.0,0.0) node (qqright) {\mbox{\LARGE $H$}};

\draw[very thick, color=gray]
    (b1) -- (bb1);

\end{tikzpicture}
\caption{A $1$-sum $X$ of a path $P_m$ and a graph $H$ joined at vertex $m$. If $H$ is controllable at $m$, 
then $X$ is controllable at $1$. 
}
\label{fig:lollipop}
\end{center}
\end{figure}

Next, we show that a $1$-sum of a path with an arbitrary controllable graph is controllable provided they
are joined at the common controllable vertex (see \cref{fig:lollipop}).

\begin{lemma}
Let $P_m$ be a path on vertices $\{1,\ldots,m\}$ and let $H$ be a graph on vertices $\{m,\ldots,m+n-1\}$.
Suppose that $H$ is controllable at vertex $m$. Let $X$ be the $1$-sum of $P_m$ and $H$ joined at vertex $m$.
Then, $X$ is controllable at vertex $1$.

\begin{proof}
Let $A_1 = A(P_m) \oplus O_{n-1}$ be the adjacency matrix of the induced subgraph $P_m$ of $X$ and
let $A_2 = O_{m-1} \oplus A(H)$ be the adjacency matrix of the induced subgraph $H$ of $X$.
Here, both $A_1$ and $A_2$ are of order $m+n-1$.
Let $e_1,\ldots,e_{m+n-1}$ be the standard basis vectors for $X$.

Let $z_k = A_X^k e_1$ for $k=0,\ldots,m+n-2$. To show that $X$ is controllable at $1$, we must prove that 
the set $\{z_0,\ldots,z_k\}$ is linearly independent for all $k=0,\ldots,m+n-2$.
We divide this into two cases.
For $k=0,\ldots,m-1$, the claim holds since $z_k = A_1^k e_1$ as $P_m$ is controllable at vertex $1$.
For $k=m,\ldots,m+n-2$, we first show that $z_k$ can be written in the following form:
\begin{equation} \label{eqn:induct}
	z_k = \sum_{i=1}^{m-1} a_i e_i + \sum_{j=m}^{k} b_j A_2^{j-m} e_m
\end{equation}
for some constants $a_i$ and $b_j$.
This can be shown by induction. 

The base case $k=m$ holds by inspecting the walk matrix of the path $P_m$.

For the inductive step, assume \cref{eqn:induct} holds for up to $k$. We now show it holds for $k+1$.
We have
\begin{align*}
z_{k+1} &= A_X z_k, 
		\ \ \mbox{ by definition of $z_k$ } \\
	&= A_X (\sum_{i=1}^{m-1} a_i e_i + \sum_{j=m}^{k} b_j A_2^{j-m} e_m),
		\ \ \mbox{ by inductive hypothesis } \\
	&= A_1 (\sum_{i=1}^{m-1} a_i e_i) + (A_1 + A_2)b_m e_m + A_2(\sum_{j=m+1}^{k} b_j A_2^{j-m} e_m),
		\ \ \mbox{ since $X$ is a $1$-sum joined at $m$ }
\end{align*}
Since $A_1(\sum_{i=1}^{m-1} a_i e_i) + b_m A_1 e_m$ is in the span of $\{e_1,\ldots,e_m\}$,
we can rewrite the last equation as follows:
\begin{align*}
z_k &= \sum_{i=1}^{m-1} \tilde{a}_i e_i + b_{m-1} e_m + b_m A_2 e_m + \sum_{j=m+1}^{k} b_j A_2^{j-m+1} e_m
\end{align*}
where $b_{m-1} = a_{m-1} e_{m}^T A_1 e_{m-1}$.
After grouping the terms involving the powers of $A_2$ together in the second summand, we obtain
\begin{align*}
z_k &= \sum_{i=1}^{m-1} \tilde{a}_i e_i + \sum_{j=m}^{k+1} \tilde{b}_j A_2^{j-m} e_m
\end{align*}
for some choice of constants $\tilde{b}_j$. This completes the proof of \cref{eqn:induct}.

We are now ready to show that the set $\{z_0,\ldots,z_k\}$ is linearly independent for $k=m,\ldots,m+n-2$.
Assume that $z_{k+1}$ is in the span of $z_0,\ldots,z_k$ for some $k \in \{m,\ldots,m+n-2\}$.
That is, for some constants $\alpha_0,\ldots,\alpha_\ell$, we have
\[
	z_{k+1} = \sum_{\ell=0}^{k} \alpha_{\ell} z_{\ell}.
\]
Now, we apply \cref{eqn:induct} to both sides and obtain the following:
\[
	\sum_{i=1}^{m-1} a_i e_i + \sum_{j=m}^{k+1} b_j A_2^{j-m} e_m
	=
	\sum_{\ell=0}^{k} \alpha_{\ell} (\sum_{i=1}^{m-1} a^{(\ell)}_i e_i + \sum_{j=m}^{\ell} b^{(\ell)}_j A_2^{j-m} e_m).
\]
By projecting both sides to the span of $e_m,\ldots,e_{m+n-2}$, we get
\[
	\sum_{j=m}^{k+1} b_j A_2^{j-m} e_m
	=
	\sum_{\ell=0}^{k} \alpha_{\ell} \sum_{j=m}^{\ell} b^{(\ell)}_j A_2^{j-m} e_m.
\]
After rearranging, this shows that $A_2^{k+1-m} e_m$ is in the span of the vectors $e_m,A_2 e_m,\ldots,A_2^{k-m} e_m$.
But, this contradicts the assumption that the walk matrix of $H$ has full rank.
\end{proof}
\end{lemma}

\begin{figure}[h]
\begin{center}
\begin{tikzpicture}[
    main node/.style={circle,draw,font=\bfseries}, main edge/.style={-,>=stealth'},
    scale=0.5,
    stone/.style={},
    black-stone/.style={black!80},
    black-highlight/.style={outer color=black!80, inner color=black!30},
    black-number/.style={white},
    white-stone/.style={white!70!black},
    white-highlight/.style={outer color=white!70!black, inner color=white},
    white-number/.style={black}]
\tikzset{every loop/.style={thick, min distance=17mm, in=45, out=135}}

% to show particle, uncomment the next line
%\gustone[0]{black}{-3}{1.25}

% ellipse
\draw[fill={gray!20}, drop shadow]
    (-9.0,0.0) ellipse (3.1cm and 1.75cm)
    (+9.0,0.0) ellipse (3.1cm and 1.75cm);

\tikzstyle{every node}=[draw, thick, shape=circle, fill={gray!20}];
\path (-1.5,0.0) node [scale=0.7] (a1) {$a$};
\path (+1.5,0.0) node [scale=0.7] (b1) {$a$};
\path (-6.5,0.0) node [scale=0.7] (aa1) {$b$};
\path (+6.5,0.0) node [scale=0.7] (bb1) {$b$};

\tikzstyle{every node}=[];
\path (-4.0,1.0) node (qleft) {\mbox{$P_m$}};
\path (+4.0,1.0) node (qright) {\mbox{$P_m$}};
\path (-9.0,0.0) node (qqleft) {\mbox{\LARGE $H$}};
\path (+9.0,0.0) node (qqright) {\mbox{\LARGE $H$}};

% middle/bridge edge
\draw[very thick, color=gray]
    (a1) -- (b1);
\draw[thick, color=gray]
    (a1) -- (aa1);
\draw[thick, color=gray]
    (b1) -- (bb1);

\tikzstyle{every node}=[];
\node at (0.0,0.5) {$\alpha$};

\end{tikzpicture}
\caption{The rooted product of $P_2$ with a $1$-sum involving the path $P_m$ and a controllable graph $H$. 
The path $P_m$ can be arbitrarily lengthened (that is $m \rightarrow \infty$) and pretty good state transfer
is preserved by rescaling only the middle edge.
}
\label{fig:chain}
\end{center}
\end{figure}

The above corollary allows for an iterative construction of weighted graphs with pretty good state
transfer by ``gluing'' controllable graphs onto weighted paths (see \cref{fig:chain}).

\begin{theorem}
Let $P_m$ be a path on vertices $\{1,\ldots,m\}$ and let $H$ be a graph on vertices $\{m,\ldots,m+n-1\}$ 
that is controllable at vertex $m$.
Let $Y$ be the $1$-sum of $P_m$ and $H$ joined at vertex $m$.
Let $X_\alpha = (\alpha P_2)^{(Y,1)}$ be a rooted product with an involution $\sigma$ that swaps the vertices of $P_2$.
Then, $X_\alpha$ has pretty good state transfer between each pair of vertices $a$ and $\sigma(a)$,
for infinitely many $\alpha \in \RR$.
\end{theorem}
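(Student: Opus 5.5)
The plan is to combine the preceding lemma on $1$-sums with \cref{thm:closure}, in the same way as \cref{lem:p2_controllable}. First, I use the lemma just proved: $Y$ is the $1$-sum of $P_m$ and $H$ joined at $m$, and $H$ is controllable at $m$. That lemma therefore gives that $Y$ is controllable at vertex $1$, so $(Y,1)$ is a controllable rooted graph. The only check here is that the vertex labelling of $Y$ matches the lemma's hypotheses exactly. The path occupies $\{1,\ldots,m\}$, $H$ occupies $\{m,\ldots,m+n-1\}$, and the root is the free endpoint $1$ of the path. This is the configuration in \cref{fig:chain}.

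Second, I take the base graph $X = P_2$ with vertices $a,b$. $P_2$ has perfect state transfer between its two vertices at time $\pi/2$, since $e^{-itA(P_2)} = \cos t\, I - i \sin t\, A(P_2)$. Hence it has pretty good state transfer between them. \cref{thm:closure} applied with $X=P_2$ and $(Y,1)$ then yields infinitely many $\alpha$ (for instance, transcendental $\alpha$ as in \cref{lemma:irreducible}) such that $(\alpha P_2)^{(Y,1)}$ has pretty good state transfer between $(y,a)$ and $(y,b)$ for every $y \in V(Y)$.

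Third, I identify these pairs with the pairs $\{v,\sigma(v)\}$. The involution $\sigma$ swapping the two vertices of $P_2$ acts on $V(X_\alpha) = V(Y)\times V(P_2)$ by $(y,a)\leftrightarrow(y,b)$. It is an automorphism because $A(X_\alpha) = A(Y)\otimes I_2 + \alpha e_1e_1^T\otimes A(P_2)$ and the swap commutes with $A(P_2)$. So every vertex $v$ of $X_\alpha$ is paired with $\sigma(v)$, and these are precisely the pairs $(y,a),(y,b)$ covered by \cref{thm:closure}. This finishes the argument.

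The main point requiring care is that the scaling in \cref{thm:closure} must preserve state transfer in the base. For $P_2$ this is automatic: $\alpha P_2$ has perfect state transfer at time $\pi/(2\alpha)$ for every $\alpha\neq 0$. The remaining hypotheses of \cref{thm:pgst} are then inherited through \cref{lemma:chiral} and the trace argument, exactly as in the proof of \cref{thm:closure}. I therefore expect no real obstacle beyond correctly invoking the controllability lemma.
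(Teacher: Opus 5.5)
Your proposal is correct and is exactly the route the paper intends (the paper states this theorem without a separate proof): the $1$-sum lemma makes $(Y,1)$ controllable, and then \cref{lem:p2_controllable}, i.e.\ \cref{thm:closure} with base graph $P_2$, gives pretty good state transfer between $(y,a)$ and $(y,b)=\sigma(y,a)$ for every $y\in V(Y)$. As a small aside, $P_2$ is a safe base graph here because its eigenvalues $\pm 1$ are nonzero, which the argument of \cref{lemma:irreducible} implicitly requires: for a zero eigenvalue one gets $\phi_0(t)=\phi(Y,t)$, and this polynomial splits over $F_0$.
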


\bigskip
\par\noindent{\em Remarks}.
As a further corollary, we obtain a family with multiple pretty good state transfer between subsets of its vertices. 
This is a loop-free generalization of a similar construction in \cite{aceghtwz25}. We state this without proof as it
follows similarly to \cref{thm:closure}.

\begin{corollary} \label{cor:upst_controllable}
Let $X$ be a graph with universal perfect state transfer (that is, perfect state transfer between every
pair of vertices).
Then, for any controllable graph $H$, the rooted product $(\alpha X)^{H}$ has multiple pretty good state transfer 
within the subset $S_y = \{(x,y) : x \in V(X)\}$ for each vertex $y \in  V(Y)$, for infinitely many $\alpha \in \RR$.

\end{corollary}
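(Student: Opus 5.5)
Fix a controllable root $r$ of $H$ and a transcendental $\alpha$ (real and transcendental over the field $F_0$ generated by the eigenvalues of $X$ and $H$; infinitely many such $\alpha$ exist). Fix a vertex $y\in V(H)$ and two vertices $a\neq b$ of $X$. The plan is to show that $(y,a)$ and $(y,b)$ satisfy both conditions of \cref{thm:pgst} in $(\alpha X)^{H}$. Since $\alpha$ does not depend on the pair, this gives pretty good state transfer between every pair in $S_y$ simultaneously, i.e.\ multiple pretty good state transfer within $S_y$.

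\textbf{Strong cospectrality.} By \cref{lemma:chiral}, the eigenprojectors factor as $E_{\theta_j(\lambda)} = E_{\theta_j(\lambda)}(J_\lambda)\otimes E_\lambda(X_\alpha)$. Hence strong cospectrality of $a,b$ in $X_\alpha$ lifts to $(y,a),(y,b)$, with inherited quarrels $q_{a,b}(\lambda)$. Universal perfect state transfer in $X$ implies that every pair of vertices of $X$ is strongly cospectral. Scaling by $\alpha$ changes neither the eigenprojectors nor the quarrels, so this holds in $X_\alpha$ as well. One point needs checking: the tensor factorization in \cref{lemma:chiral} requires the $\theta_j(\lambda)$ to be distinct across different $\lambda$. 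This follows from \cref{lemma:irreducible}: each $\phi_\lambda$ is irreducible over $F=F_0(\alpha)$, and $\phi_\lambda\neq\phi_\mu$ for $\lambda\ne\mu$. So they share no root, and each $\phi_\lambda$ is separable.

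\textbf{Kronecker condition.} This repeats the trace argument from the proof of \cref{thm:closure} word for word. Suppose $\sum_{j,\lambda}\ell_j(\lambda)\theta_j(\lambda)=0$. Applying $\Tr_{M/F}$ and using irreducibility of $\phi_\lambda$ gives $\sum_\lambda L_\lambda\lambda=0$, where $L_\lambda=\sum_j\ell_j(\lambda)$. Since $X$ has perfect (hence pretty good) state transfer from $a$ to $b$, so does $X_\alpha$ at the rescaled time $\tau/\alpha$. Therefore condition (ii) of \cref{thm:pgst} holds for $X_\alpha$ and the pair $a,b$ with some $\delta$, which gives $\sum_\lambda L_\lambda(q_{a,b}(\lambda)+\delta)\equiv 0 \pmod{2\pi}$. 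The inherited quarrel property converts this into the required congruence for the $\theta_j(\lambda)$.

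\textbf{Main obstacle.} The delicate step is that \cref{thm:pgst} is applied to $X_\alpha$ with only the eigenvalues in the support of $a$. Under universal PST, every vertex has full eigenvalue support, because all eigenvalues of a graph with universal PST are simple. So $\Sp(X_\alpha)$ is exactly the relevant set. Simplicity also makes each $E_\lambda(X)$ rank one, so the quarrels are well defined. With that settled, the rest is a direct transcription of the proof of \cref{thm:closure}, uniform over all pairs in $S_y$ and all $y\in V(H)$.
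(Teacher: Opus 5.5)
Your route is the paper's own. The paper gives no separate argument for this corollary and only says it follows as in \cref{thm:closure}, which is what you do, pair by pair, with one common transcendental $\alpha$. Your remark that the roots $\theta_j(\lambda)$ must be distinct across different $\lambda$ for \cref{lemma:chiral} to give the eigenprojectors is a point the paper glosses over.

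However, both places where you invoke \cref{lemma:irreducible} fail when $0\in\Sp(X)$. For $\lambda=0$ one has $\phi_0(t)=\phi(H,t)$, which does not involve $\alpha$ and splits over $F_0$ by the definition of $F_0$. So $\phi_0$ is reducible over $F$ once $|V(H)|\ge 2$; the lemma's proof needs $\theta\neq 0$ to make $\phi_\lambda$ genuinely linear in $\alpha$. This is not a corner case for this corollary: the oriented triangle $\vec{C}_3$, the standard example of universal perfect state transfer, has spectrum $\{0,\pm\sqrt{3}\}$. Then the $\theta_j(0)$ are the eigenvalues of $H$ and lie in $F$, so $\Tr_{M/F}(\theta_j(0))=[M:F]\,\theta_j(0)$, not the value $0$ given by the formula $\frac{[M:F]}{|V(H)|}\lambda$. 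Hence your derivation of $\sum_\lambda L_\lambda\lambda=0$ does not go through as written. Likewise, your distinctness argument (distinct irreducibles share no root) does not cover the pair $\phi_0,\phi_\lambda$.

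The repair is short. Put $m=|V(H)|\ge 2$. For $\lambda\neq 0$, $\phi_\lambda$ is irreducible over $F$ of degree $m$, so it has no root in $F$. All roots of $\phi_0$ lie in $F_0\subseteq F$ and are simple because $(H,r)$ is controllable. This gives the distinctness you need. In the trace step, applying $\Tr_{M/F}$ to the premise and dividing by $[M:F]$ yields
\[
\sum_{j}\ell_j(0)\,\theta_j(0)+\frac{\alpha}{m}\sum_{\theta\in\Sp(X)\setminus\{0\}}L_{\alpha\theta}\,\theta=0,
\]
where both sums lie in $F_0$. Since $\alpha\notin F_0$, the second sum vanishes, so $\sum_\lambda L_\lambda\lambda=0$; the $\lambda=0$ term contributes nothing whatever $L_0$ is. From there your argument applies verbatim, including the $\lambda=0$ term in the final congruence.

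A smaller slip: simple eigenvalues do not imply full eigenvalue support (the middle vertex of $P_3$ misses the eigenvalue $0$). Full support comes instead from $a$ being strongly cospectral with every vertex, so all vertices share one support, which must be all of $\Sp(X)$. You do not actually need full support. If $E_\lambda(X)e_a=0$, then by \cref{lemma:chiral} no $\theta_j(\lambda)$ lies in the support of $(y,a)$, so $L_\lambda=0$ automatically.
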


\ignore{
%%%%%%%%%%%%%%%%%%%%%%%%%%%%%%%%%%%%%%%%%%%%%%%%%%%%%%%%%%%%%%%%%%%%%%%%%%%%%%%
\begin{figure}[h]
\begin{center}
\begin{tikzpicture}[
% T.rex
    main node/.style={circle,draw,font=\bfseries}, main edge/.style={-,>=stealth'},
    scale=0.5,
    stone/.style={},
    black-stone/.style={black!80},
    black-highlight/.style={outer color=black!80, inner color=black!30},
    black-number/.style={white},
    white-stone/.style={white!70!black},
    white-highlight/.style={outer color=white!70!black, inner color=white},
    white-number/.style={black}]
\tikzset{every loop/.style={thick, min distance=17mm, in=45, out=135}}

% to show particle, uncomment the next line
%\gustone[0]{black}{-3}{1.25}

% ellipse
\draw[fill={gray!20}, drop shadow]
	%(0.0,0.15) ellipse (1.95cm and 1.95cm)
    (-4.25,0.0) ellipse (3.25cm and 1.25cm)
    (+4.25,0.0) ellipse (3.25cm and 1.25cm)
    (0.0,3.5) ellipse (1.75cm and 2.5cm);

\tikzstyle{every node}=[draw, shape=circle, fill={gray!20}];
\path (-1.5,0.0) node [scale=0.75] (a1) {$a$};
\path (+1.5,0.0) node [scale=0.75] (b1) {$b$};
\path (0.0,+1.5) node [scale=0.75] (c1) {$c$};

\tikzstyle{every node}=[];
%\path (0.0,0.0) node (qcenter) {\mbox{\LARGE $X$}};
\path (-4.0,0.0) node (qleft) {\mbox{\LARGE $H$}};
\path (+4.0,0.0) node (qright) {\mbox{\LARGE $H$}};
\path (0.0,4.0) node (qtop) {\mbox{\LARGE $H$}};

% middle/bridge edge
\draw[->, very thick, color=gray]
    (a1) -- (b1);
\draw[->, very thick, color=gray]
    (b1) -- (c1);
\draw[->, very thick, color=gray]
    (c1) -- (a1);

%\tikzstyle{every node}=[];
%\node at (0.0,0.5) {$\alpha$};

\end{tikzpicture}
\caption{The rooted product of the oriented $\vec{C}_3$ with a pendant controllable graph $H$.
As $\vec{C}_3$ has universal perfect state transfer, pretty good state transfer occurs in $(\alpha\vec{C}_3)^H$ 
between $(a,y)$ and $(b,y)$, for all $y \in V(H)$, for a suitable scaling $\alpha$.
This complements a result of Acuaviva \etal \cite{aceghtwz25} where $H$ is a path with a weighted self-loop.
}
\label{fig:oriented_rooted}
\end{center}
\end{figure}
%%%%%%%%%%%%%%%%%%%%%%%%%%%%%%%%%%%%%%%%%%%%%%%%%%%%%%%%%%%%%%%%%
}

%%%%%%%%%%%%%%%%%%%%%%%%%%%%%%%%%%%%%%%%%%%%%%%%%%%%%%%%%%%%%%%%%%%%%%%%%%%%%%%%%%%%%%%%%%%%%%%%%%%%%
\section{Efficient high-fidelity state transfer}

In pretty good state transfer, although we can achieve arbitrarily high fidelity of $1-\epsilon$, 
for any $\epsilon > 0$, the transfer time $\tau$ is non-constructive 
(as it employs Kronecker's approximation theorem). 
But, we show how to circumvent this by exploiting further properties of the rooted product.

For a nonsingular matrix $M$, its {\em condition number} $\kappa(M)$ is the ratio between the largest 
and the smallest eigenvalues of $M$ in absolute values (see \cite{hj13}).
More specifically, let
\[
\lambda_{min}(M) = \min\{|\lambda| : \lambda \in \Sp(M)\},
\ \ \
\lambda_{max}(M) = \max\{|\lambda| : \lambda \in \Sp(M)\}.
\]
Then, the condition number of $M$ is defined as
\[
	\kappa(M) = \frac{\lambda_{max}(M)}{\lambda_{min}(M)}.
\]
So, if $M$ is normalized (or $\lambda_{max}(M)=1$), then $1/\kappa(M)$ 
is distance of the closest eigenvalue of $M$ to zero. 
We say $M$ has good condition number if this distance is large.

We show that the condition number of the rooted product $X^{(Y,r)}$ can largely be
controlled by the condition number of $Y\setminus r$ (modulo a few other assumptions).

\begin{lemma} \label{lemma:controllable-condition}
Let $X$ be a graph and $Y$ be a controllable graph at its vertex $r$.
Assume that $A(Y)$ and $A(Y \setminus r)$ are nonsingular and that $A(X)$ is positive semidefinite with bounded spectral radius.
If 
\begin{equation} \label{eqn:strange}
e_r^T A(Y)^{-1} e_r > 0,
\end{equation} 
then
\begin{equation} \label{eqn:kappa}
	\lambda_{min}(X^{(Y,r)}) \ge \lambda_{min}(Y\setminus r).
\end{equation}

\par\noindent{\em Remark}: The assumption that $A(X)$ is positive definite is not unreasonable.
For example, this can be achieved by taking $(\alpha I + A(X))/3\alpha$, where $\alpha = \lceil 2\norm{A(X)}\rceil$.

\begin{proof}
By \cref{eqn:char-poly} and \cref{lemma:gm-decomposition}, the eigenvalues of the rooted product $X^{(Y,r)}$ are the roots of
the polynomials $\phi_\lambda(t) = \phi(Y,t) - \lambda\phi(Y \setminus r,t)$, as $\lambda \in [0,1]$ ranges over the eigenvalues of $X$. 
If $z_0$ is a zero of $\phi_\lambda(t)$, then $\phi(Y,z_0) = \lambda\phi(Y \setminus r,z_0)$.
Thus, the roots of $\phi_\lambda(t)$ are the points where the two polynomials $\phi(Y,t)$ and $\lambda\phi(Y \setminus r,t)$ intersect.

Note that
\begin{equation} \label{eqn:determinant}
	\frac{\phi(Y\setminus r,t)}{\phi(Y,t)} = e_r^T (tI - A(Y))^{-1} e_r.
\end{equation}
Since $e_r^T A(Y)^{-1}e_r$ is positive, $\phi(Y,0)$ and $\phi(Y\setminus r,0)$ must have opposite signs.

Suppose that $\mu_r$ and $\mu_{r+1}$ are two consecutive eigenvalues of $Y$ closest to zero where $\mu_r < 0 < \mu_{r+1}$. 
Let $I = (\mu_r,\mu_{r+1})$ be the interval around zero formed by these two eigenvalues.
As the polynomials $\phi(Y,t)$ and $\phi(Y \setminus r,t)$ interlace (by Lemma 7.2 in \cite{g12} and Cauchy interlacing), 
there is a unique eigenvalue $\theta$ of $Y \setminus r$ which lies in $I$.

We first consider the case where $\phi(Y,0) > 0$. This implies that $\phi(Y \setminus r,0) < 0$ since $e_r^T A^{-1} e_r > 0$.
If $\theta \in (\mu_r,0)$, then the only intersection of $\phi(Y,t)$ and $\phi(Y \setminus r,t)$ in $I$ 
is in the subinterval $(\mu_r,\theta)$. This argument applies to $\lambda\phi(Y\setminus r,t)$ as $\lambda \in [0,1]$, and
hence it proves \cref{eqn:kappa}.
On the other hand, if $\theta \in (0,\mu_{r+1})$, then the only intersection of $\phi(Y,t)$ and $\phi(Y \setminus r,t)$ 
in $I$ lies in the subinterval $(\theta,\mu_{r+1})$. This again proves \cref{eqn:kappa} by a similar reasoning.

The case where $\phi(Y,0) < 0$ (and hence $\phi(Y \setminus r,0) > 0$) is similar.
\end{proof}
\end{lemma}

We state the next result which will be used in our subsequent theorem.
The result states that high-fidelity state transfer can be achieved by attaching weak pendant edges on
two cospectral vertices that belongs to a sufficiently nonsingular graph.

\begin{theorem} \label{thm:trex} (Kay and Tamon \cite{kt}, see Theorem 4.1) \\
Let $X_0$ be a connected graph whose adjacency matrix satisfies $\norm{A(X_0)}=1$ with a finite condition number $\kappa$.
Let $X = X_0 \sqcup \overline{K}_2$ be a disjoint union of $X_0$ and two isolated vertices. %with adjacency matrix $H$.
Suppose $y_a,y_b$ are cospectral vertices in $X_0$ and $\delta \in (0,1)$ satisfy
$\delta\kappa \ll 1$ and $|\lip{e_{y_b}}{A(X_0)^{-1}e_{y_a}}| \gg \delta^2\kappa^3$.
If $z_a,z_b$ are the vertices of $\overline{K}_2$ and $W$ is a matrix which represents the edges 
$(y_a,z_a)$ and $(y_b,z_b)$, then 
\begin{equation}
    |\lip{e_{z_b}}{e^{-i\tau(A(X) + \delta W)} e_{z_a}}| \ge 1-\OO(\delta)
\end{equation}
with transfer time $\tau = \frac{\pi}{2} \frac{1}{\delta^2|\lip{e_{y_b}}{A(X_0)^{-1}e_{y_a}}|}$.
\end{theorem}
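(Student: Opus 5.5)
We would prove \cref{thm:trex} by a perturbative, effective-Hamiltonian argument in the weak-coupling regime $\delta\kappa \ll 1$. Write $H_\delta = A(X) + \delta W = A(X_0)\oplus O_2 + \delta W$. At $\delta=0$ the eigenvalue $0$ has a two-dimensional eigenspace spanned by $e_{z_a},e_{z_b}$, and the rest of the spectrum of $A(X)$ is bounded away from $0$ by $1/\kappa$, since $\norm{A(X_0)}=1$. So the plan is to (i) show that for small $\delta$ there is an invariant two-dimensional subspace close to $\mathrm{span}\{e_{z_a},e_{z_b}\}$; (ii) compute the effective $2\times 2$ Hamiltonian on it to second order in $\delta$; (iii) read off the Rabi-type oscillation $e_{z_a}\to e_{z_b}$ and its period; and (iv) bound the leakage and the error terms over the time scale $\tau$.

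For step (ii) we use a Schrieffer--Wolff (or Schur complement) reduction. With $P$ the projector onto $\mathrm{span}\{e_{z_a},e_{z_b}\}$ and $Q=I-P$, the effective operator is
\[
H_{\mathrm{eff}}(E) = P\delta W P + \delta^2 P W Q (E - QH_\delta Q)^{-1} Q W P .
\]
Here $PWP=0$, and $QWP$ maps $e_{z_a}\mapsto e_{y_a}$ and $e_{z_b}\mapsto e_{y_b}$. Near $E\approx 0$ this gives
\[
H_{\mathrm{eff}} \approx -\delta^2
\begin{pmatrix} \lip{e_{y_a}}{A(X_0)^{-1}e_{y_a}} & \lip{e_{y_a}}{A(X_0)^{-1}e_{y_b}} \\ \lip{e_{y_b}}{A(X_0)^{-1}e_{y_a}} & \lip{e_{y_b}}{A(X_0)^{-1}e_{y_b}} \end{pmatrix}.
\]
Corrections come from expanding the resolvent in $E$ and in $\delta Q W Q$, and are of relative size $\OO(\delta\kappa)$ with absolute size $\OO(\delta^3\kappa^2)$ or so. Cospectrality of $y_a,y_b$ gives $e_{y_a}^T A(X_0)^{-1}e_{y_a}=e_{y_b}^T A(X_0)^{-1}e_{y_b}$, because $A^{-1}=\sum \lambda_s^{-1}E_s$. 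The diagonal entries are therefore equal, and the leading $2\times2$ matrix is $c I + \delta^2 \beta \sigma_x$ with $\beta=\lip{e_{y_b}}{A(X_0)^{-1}e_{y_a}}$, which is real since $A$ is real symmetric. The evolution $e^{-it\delta^2\beta\sigma_x}$ sends $e_{z_a}$ to $e_{z_b}$, up to phase, at $t=\pi/(2\delta^2|\beta|)$. This is exactly the stated $\tau$.

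Steps (i) and (iv) are the main obstacle. Over a time of order $\delta^{-2}|\beta|^{-1}$, the error terms of size $\OO(\delta^3\kappa^2)$ in $H_{\mathrm{eff}}$ accumulate a phase error of order $\tau\cdot\delta^3\kappa^2 \sim \delta\kappa^2/|\beta|$. We must keep this small relative to the splitting, and this is where the hypothesis $|\beta|\gg\delta^2\kappa^3$ enters. We would first try to make this rigorous with a Davis--Kahan / Riesz-projector argument. The two eigenvalues of $H_\delta$ nearest $0$ lie within $\OO(\delta^2\kappa)$ of $0$, separated by a gap of order $1/\kappa$ from the rest. Their eigenvectors are $\OO(\delta\kappa)$-close to $P$, which bounds the leakage amplitude by $\OO(\delta\kappa)\subseteq\OO(\delta)$-type terms. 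The splitting of these two eigenvalues equals $2\delta^2|\beta|(1+o(1))$, and their symmetric/antisymmetric structure is forced by cospectrality. This structure comes from the near-degenerate perturbation of the pair $(e_{z_a}\pm e_{z_b})$, with parity inherited approximately from the equal diagonal entries. Writing $e_{z_a}$ in the true eigenbasis and evaluating at $\tau$ then yields fidelity $1-\OO(\delta)$. Along the way we absorb the common diagonal shift $c\delta^2$ into a global phase and check that the relative phase error at $\tau$ is $o(1)$ under the stated hypotheses.
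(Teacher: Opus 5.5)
The paper gives no proof of this statement; it is quoted from Kay and Tamon. So the comparison is with the weak-coupling mechanism behind it, and your plan is that mechanism: reduce onto $\mathrm{span}\{e_{z_a},e_{z_b}\}$, use cospectrality to equalize the diagonal of the effective $2\times2$ block, and read off the Rabi time $\pi/(2\delta^2|\beta|)$. Your leading-order computation is correct.

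\textbf{The gap is in step (iv).} That is exactly where the hypotheses must enter, and as written it does not close. You take the corrections to $H_{\mathrm{eff}}$ to be of size $\OO(\delta^3\kappa^2)$, which gives an accumulated phase error $\tau\cdot\delta^3\kappa^2\sim\delta\kappa^2/|\beta|$. The hypothesis $|\beta|\gg\delta^2\kappa^3$ does not make this small: with $\kappa\asymp1$ and $|\beta|=\delta^{3/2}$ the hypothesis holds, yet $\delta\kappa^2/|\beta|=\delta^{-1/2}$.

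The same bookkeeping undermines your ``approximate parity'' step. A diagonal asymmetry of size $\delta^3\kappa^2$ would rotate the two near-zero eigenvectors away from $(e_{z_a}\pm e_{z_b})/\sqrt2$ by an angle of order $\delta\kappa^2/|\beta|$. The gap governing that rotation is the splitting $\delta^2|\beta|$, not $1/\kappa$. A Davis--Kahan or Riesz bound with gap $1/\kappa$ only locates the two-dimensional invariant subspace; it says nothing about the basis inside it.

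\textbf{How to fix it.} The missing observation is that $W$ is block off-diagonal: $PWP=0$ and $QWQ=0$. Hence $QH_\delta Q=A(X_0)$, the Feshbach--Schur reduction is exact, and there are no odd-order terms at all. With $B=[\,e_{y_a}\ e_{y_b}\,]$ and $R(E)=(E-A(X_0))^{-1}$, every eigenvector $(q,p)$ of $H_\delta$ with eigenvalue $E\notin\Sp(A(X_0))$ satisfies
\[
q=\delta R(E)Bp, \qquad Ep=\delta^{2}B^{T}R(E)B\,p .
\]
Cospectrality gives $R(E)_{y_ay_a}=R(E)_{y_by_b}$ for \emph{every} $E$, not just $E=0$. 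So $B^TR(E)B=g(E)I+h(E)\sigma_x$, and the two eigenvalues $E_\pm$ with $|E_\pm|=\OO(\delta^2\kappa)$ have $P$-components \emph{exactly} $(e_{z_a}\pm e_{z_b})/\sqrt2$. No approximate-parity argument is needed.

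The only correction to the splitting comes from the energy dependence of the resolvent. By the resolvent identity, $|h(E_\pm)-h(0)|\le|E_\pm|\,\norm{R(E_\pm)}\,\norm{R(0)}=\OO(\delta^2\kappa^3)$. Similarly $\delta^2|g(E_+)-g(E_-)|=\OO(\delta^2\kappa^2)\,|E_+-E_-|$. Since $h(0)=-\beta$ and $|\beta|\le\kappa$, it follows that $E_+-E_-=-2\delta^2\beta\,(1+\OO(\delta^2\kappa^3/|\beta|))$.

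So the true error in the effective splitting is $\OO(\delta^4\kappa^3)$, not $\OO(\delta^3\kappa^2)$. Consequently:
\begin{itemize}
\item the phase error at $\tau$ is $\OO(\delta^2\kappa^3/|\beta|)$, which is precisely what $|\beta|\gg\delta^2\kappa^3$ controls;
\item leakage out of the pair costs $\OO(\norm{q_\pm}^2)=\OO(\delta^2\kappa^2)$, which is controlled by $\delta\kappa\ll1$.
\end{itemize}
Altogether the transfer amplitude at $\tau$ has modulus at least $1-\OO(\delta^2\kappa^2)-\OO(\delta^4\kappa^6/\beta^2)$.
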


We collect some basic observations on the spectrum of a rooted product. 
Note that $\lambda_{max}(X) = \norm{A(X)}$.

\begin{claim} \label{claim:bounded-norm}
Let $X$ be a graph and let $Y$ be a graph controllable at $r$. Then,
\[
	\norm{A(X^Y)} \le \norm{A(X)} + \norm{A(Y)}.
\]

\begin{proof}
Using $A(X^Y) = A(Y) \otimes I + e_r e_r^T \otimes A(X)$, we have
\begin{align*}
\norm{A(X^Y)} &\le \norm{A(Y) \otimes I} + \norm{e_r e_r^T \otimes A(X)}, 
		\ \ \mbox{ by subadditivity } \\
	&\le \norm{A(Y)} + \norm{e_r e_r^T}\norm{A(X)}, 
		\ \ \mbox{ by submultiplicativity } \\
	&\le \norm{A(Y)} + \norm{A(X)}.
\end{align*}
We had used the fact that the norm of $e_r e_r^T$ is at most $1$ (since it is a projector).
\end{proof}
\end{claim}

\begin{claim} \label{claim:matrix-inverse}
Let $X$ be a graph with nonsingular adjacency matrix. Let $a,b$ be two strongly cospectral vertices in $X$. Then,
\[
	\frac{1}{\lambda_{max}(X)} \le |e_b^T A(X)^{-1} e_a| \le \frac{1}{\lambda_{min}(X)}.
\]

\begin{proof}
If $A(X) = \sum_r \theta_r E_r$ is the spectral decomposition of $A(X)$, then
\begin{align*}
e_b^T A(X)^{-1}e_a 
	&= \sum_r \frac{1}{\theta_r} e_b^T E_r e_a \\
	&= \sum_r \frac{1}{\theta_r} \sqrt{e_b^T E_r e_b} \sqrt{e_a^T E_r e_a},
		\ \ \mbox{ by Cauchy-Schwarz and strong cospectrality} \\
	&= \sum_r \frac{1}{\theta_r} e_a^T E_r e_a, 
		\ \ \mbox{ by cospectrality $e_a^T E_r e_a = e_b^T E_r e_b$}.
\end{align*}
Note that the Cauchy-Schwarz inequality is tight due to strong cospectrality ($E_r a_a$ is parallel to $E_r e_b$).
The claim now follows by bounding $1/\theta_r$ by $1/\lambda_{min}$ from above and by $1/\lambda_{max}$ from below.
\end{proof}
\end{claim}

Now, we are ready to state and prove our main observation on high-fidelity state transfer on the rooted
product of a graph with a controllable graph that has a good condition number (see \cref{fig:tulip}).

\begin{theorem}
Let $X$ be a graph where $A(X)$ is positive definite with bounded norm.
Let $Y$ be a graph controllable at $r$ where both $A(Y)$ and $A(Y \setminus r)$ are nonsingular and
have bounded norms. Assume that $\lambda_{min}(Y \setminus r) = \Omega(1)$, and $e_r^T A(Y)^{-1}e_r > 0$.
For two {\em strongly} cospectral vertices $a,b \in V(X)$ and a vertex $y \in V(Y)$,
let $Z_{a,b}^y$ be a graph obtained from $X^{(Y,r)}$ by adding two new vertices $z_a,z_b$ and 
two pendant edges $(y_a,z_a)$ and $(y_b,z_b)$ of weight $\delta \ll 1$.
Then, quantum state transfer between $z_a$ and $z_b$ occurs in $Z_{a,b}^y$ with fidelity $1-\OO(\delta)$ 
at time $\tau = \frac{\pi}{2} \frac{1}{\delta^2\Delta}$ where $\Delta \asymp 1$.

\begin{proof}
Let $\kappa = \lambda_{min}(Y \setminus r)$ where $\kappa = \Omega(1)$.
By \cref{lemma:controllable-condition}, we have that $\lambda_{min}(X^{(Y,r)}) \ge \kappa$. 
To apply \cref{thm:trex}, we require $\delta\kappa \ll 1$. Since $\kappa = \Omega(1)$,
it suffices to choose $\delta \ll 1$. For \cref{thm:trex}, we also require
$\Delta := |e_{y_b}^T A(X^{Y})^{-1} e_{y_a}|$ to satisfy $\Delta \gg \delta^2\kappa^3$.
Again, as $\kappa = \Omega(1)$, it suffices to ensure $\Delta \gg \delta^2$.
By \cref{claim:matrix-inverse}, we know that 
\[
	\frac{1}{\lambda_{max}(X^Y)} \ \le \ \Delta \ \le \ \frac{1}{\lambda_{min}(X^Y)}.
\]
By \cref{claim:bounded-norm}, $A(X^Y)$ has bounded norm since both $X$ and $Y$ have bounded norms.
This implies $\lambda_{max}(X^Y) = \OO(1)$. Therefore, $\Delta \asymp 1$, which clearly satisfies
$\Delta \gg \delta^2$ since $\delta \ll 1$. 
\end{proof}
\end{theorem}

\begin{figure}[t]
\begin{center}
\begin{tikzpicture}[
% T.rex
    main node/.style={circle,draw,font=\bfseries}, main edge/.style={-,>=stealth'},
    scale=0.5,
    stone/.style={},
    black-stone/.style={black!80},
    black-highlight/.style={outer color=black!80, inner color=black!30},
    black-number/.style={white},
    white-stone/.style={white!70!black},
    white-highlight/.style={outer color=white!70!black, inner color=white},
    white-number/.style={black}]
\tikzset{every loop/.style={thick, min distance=17mm, in=45, out=135}}

% to show particle, uncomment the next line
%\gustone[0]{black}{-3}{1.25}

% ellipse
\draw[fill={gray!20}, drop shadow]
    (-10.0,3.5) ellipse (0.95cm and 2.75cm)
    (-6.0,3.5) ellipse (0.95cm and 2.75cm)
    (-2.0,3.5) ellipse (0.95cm and 2.75cm)
    (+2.0,3.5) ellipse (0.95cm and 2.75cm)
    (+6.0,3.5) ellipse (0.95cm and 2.75cm)
    (+10.0,3.5) ellipse (0.95cm and 2.75cm);

\tikzstyle{every node}=[draw, thick, shape=circle, fill={gray!20}];
\path (-10.0,0.0) node [scale=0.8] (a1) {$a$};
\path (+10.0,0.0) node [scale=0.8] (a6) {$b$};

% base path
\draw[thick]
    (a1) -- (a6);

\path (-10.0,1.5) node [scale=0.8] (aa1) {$r$};
\path (-10.0,5.5) node [scale=0.8] (b1) {};
\path (-6.0,0.0) node [scale=0.8] (a2) {};
\path (-6.0,1.5) node [scale=0.8] (aa2) {$r$};
\path (-2.0,0.0) node [scale=0.8] (a3) {};
\path (-2.0,1.5) node [scale=0.8] (aa3) {$r$};
\path (+2.0,0.0) node [scale=0.8] (a4) {};
\path (+2.0,1.5) node [scale=0.8] (aa4) {$r$};
\path (+6.0,0.0) node [scale=0.8] (a5) {};
\path (+6.0,1.5) node [scale=0.8] (aa5) {$r$};
\path (+10.0,1.5) node [scale=0.8] (aa6) {$r$};
\path (+10.0,5.5) node [scale=0.8] (b6) {};

\tikzstyle{every node}=[draw, thick, shape=circle, fill={black}];
\path (-12.5,5.5) node [scale=0.8] (c1) {};
\path (+12.5,5.5) node [scale=0.8] (c6) {};

\draw[thick]
	(a1) -- (aa1)
	(a2) -- (aa2)
	(a3) -- (aa3)
	(a4) -- (aa4)
	(a5) -- (aa5)
	(a6) -- (aa6);

% weak couplings
\draw[dotted, ultra thick, color=gray]
    (b1) -- (c1)
    (b6) -- (c6);

\tikzstyle{every node}=[];
\node at (-11.25,5.95) {\mbox{\small $\delta$}};
\node at (+11.25,5.95) {\mbox{\small $\delta$}};

\tikzstyle{every node}=[];
\path (-10.0,3.5) node (q1) {$Y$};
\path (-6.0,3.5) node (q1) {$Y$};
\path (-2.0,3.5) node (q1) {$Y$};
\path (+2.0,3.5) node (q1) {$Y$};
\path (+6.0,3.5) node (q1) {$Y$};
\path (+10.0,3.5) node (q1) {$Y$};

\end{tikzpicture}
\caption{Efficient high-fidelity state transfer between the dark vertices in time $1/\delta^2$ 
tunneling through the rooted product $X^Y$ where $Y$ is controllable at $r$.
This is facilitated by strong cospectrality of $a$ and $b$ in the base graph $X$
(no state transfer is required between $a$ and $b$).
Here, as an example, $X$ is a path.
}
\label{fig:tulip}
\end{center}
\end{figure}

%%%%%%%%%%%%%%%%%%%%%%%%%%%%%%%%%%%%%%%%%%%%%%%%%%%%%%%%%%%%%%%%%%%%%%%%%%%%%%%%%%%%%%%%%%%%%%%%%%%%%
\section{Concluding remarks}

In this work, we showed that a classical tool in algebraic graph theory called rooted product is useful
for quantum state transfer. 

First, we proved a simple transference principle for rooted product of a base graph with state transfer 
and a controllable graph. Here, the state transfer properties of the base graph will be inherited by 
the rooted product. Second, we showed that, even in the absence of state transfer in the base graph, 
the rooted product is useful in constructing graphs with efficient high-fidelity state transfer. 
For the latter, we exploited the fact that the condition number of the rooted product is controlled 
by the pendant (controllable) subgraph. 

Our second contribution also circumvented the use of transcendental weights for pretty good state transfer.
In most realistic scenarios, it will be hard to prepare weights that are transcendental (due to precision
and rounding errors). It is true that one may employ Hilbert Irreducibility Theorem (HIT) to replace 
transcendental numbers, but the complexity issues surrounding effective versions of Hilbert's theorem
remains nontrivial to the best of our knowledge.

We conclude with the following open questions for future explorations:
\begin{enumerate}
\item Is it possible to extend the results to a rooted product between a base graph $X$ of order $n$
	and a sequence of distinct rooted graphs $\{(Y_1,r_1),\ldots,(Y_n,r_n)\}$?
	Makmal \etal \cite{mzmtb14} had studied the case when $X$ is a $n$-cube and $Y_\ell$ are paths
	with different lengths.

\item Given that random graphs are controllable almost surely (see \cite{ot16}), our construction 
	works on almost all graphs. Are there natural classes of controllable graphs useful for 
	quantum state transfer?

\end{enumerate}

%%%%%%%%%%%%%%%%%%%%%%%%%%%%%%%%%%%%%%%%%%%%%%%%%%%%%%%%%%%%%%%%%%%%%%%%%%%%%%%%%%%%%%%%%%%%%%%%%%%%%
\section*{Acknowledgments}

This work is supported by NSF ExpandQISE grant 2427020.
We thank Ada Chan for helpful discussions.
All results presented in this work were discovered, proved and written solely by the human authors.

%%%%%%%%%%%%%%%%%%%%%%%%%%%%%%%%%%%%%%%%%%%%%%%%%%%%%%%%%%%%%%%%%%%%%%%%%%%%%%%%%%%%%%%%%%%%%%%%%%%%%

%%%%%%%%%%%%%%%%%%%%%%%%%%%%%%%%%%%%%%%%%%%%%%%%%%%%%%%%%%%%%%%%%%%%%%%%%%%%%%%%%%%%%%%%%%%%%%%%%%%%%
\end{document}